\newcommand*{\m}[1]{\ensuremath{\mathsf{#1}}}
\newcommand*{\rname}[1]{\textsc{#1}}
\newcommand*{\com}{\ensuremath{\, \mbox{\tt{-\!\raisebox{-0.12ex}{>}}}\,}}
\newcommand*{\tricom}{\ensuremath{\, \mbox{\tt{\raisebox{-0.12ex}{<}\!-\!\raisebox{-0.12ex}{>}}}\,}}
\newcommand*{\reduce}[1]{\xrightarrow[]{#1}}
\newcommand*{\nil}{\textbf{0}}
\begin{document}
\title{From Infinity to Choreographies}
\subtitle{Extraction for Unbounded Systems}
\titlerunning{From Infinity to Choreographies}
\author{Bjørn Angel Kjær
  \and
  Luís Cruz-Filipe
  \and
  Fabrizio Montesi
}

\authorrunning{B.A.~Kjær et al.}

\institute{
  Department of Mathematics and Computer Science,
  University of Southern Denmark\\
  Campusvej 55, 5230 Odense M, Denmark
  \email{bjoernak@gmail.com,\{lcfilipe,fmontesi\}@imada.sdu.dk}
}

\maketitle

\begin{abstract}
  Choreographies are formal descriptions of distributed systems, which focus on the way in which
  participants communicate.
  While they are useful for analysing protocols, in practice systems are written directly
  by specifying each participant's behaviour.
  This created the need for \emph{choreography extraction}: the process of obtaining a choreography
  that faithfully describes the collective behaviour of all participants in a distributed protocol.

  Previous works have addressed this problem for systems with a predefined, finite number of
  participants.
  In this work, we show how to extract choreographies from system descriptions where the total
  number of participants is unknown and unbounded, due to the ability of spawning new processes at
  runtime.
  This extension is challenging, since previous algorithms relied heavily on the set of possible
  states of the network during execution being finite.

  \keywords{Choreography \and Extraction \and Concurrency \and Message passing}
\end{abstract}

\section{Introduction}
\emph{Choreographies} are coordination plans for concurrent and distributed systems, which describe the expected interactions that system participants should enact~\cite{CM20,M22}. Languages for expressing choreographies (choreographic languages) are widely used for documentation and specification purposes, some notable examples being Message Sequence Charts~\cite{msc}, UML Sequence Diagrams~\cite{uml}, and choreographies in the Business Process Modelling Notation (BPMN)~\cite{bpmn}.
More recently, such languages have also been used for programming and verification, e.g., as in choreographic programming~\cite{M13:phd} and multiparty session types~\cite{HYC16} respectively.

In practice, many system implementations do not come with a choreography yet.
\emph{Choreography extraction} (extraction for short) is the synthesis of a choreography that faithfully represents the specification of a system based on message passing (if it exists)~\cite{CMS18,CLMS22,LT12,LTY15}. 
Extraction is helpful because it gives developers a global overview of how the \emph{processes} (abstractions of endpoints) of a system interact, making it easier to check that they collaborate as intended.
In general, though, it is undecidable whether such a choreography exists, making extraction a challenging problem.

Current methods for extraction cannot analyse systems that spawn new processes at runtime: they can only deal with systems where the number of participants is finite and statically known.
This is an important limitation, because many modern distributed systems dynamically create
processes for several reasons, such as scalability.
The aim of this article is to address this shortcoming.

As an example of a system that cannot be analysed by previous work, consider a simple implementation of a serverless architecture (\cref{example:serverlesspseudo}).
\begin{example}
A simple serverless architecture, written in pseudocode (we will formalise this example later in the article). A client sends a request to an entry-point, which then spawns a temporary process to handle the client. The spawned process computes the response to the request. Then it offers the client to make another request, in which case a new process is spawned to handle that, since the new request may require a different service.
\begin{lstlisting}[escapeinside=**, keywords={send, loop, introduces, receive, if, then, else, terminate, request}]
*\textsf{\textbf{client:}}*
	send init to server;
    loop {
        server presents worker; receive result from worker; 
        if finished
           then request termination from worker; terminate;
           else request next from worker; server:=worker;
    }

*\textsf{\textbf{server:}}*
	receive init from client; Handle(server);

	procedure Handle(parent) {
		spawn worker with {
			parent presents client; ComputeResult();
			send result to client; receive request from client;
			switch request {
				next: Handle(worker);
				termination: terminate;
			}
		}
		introduce worker and client; terminate;
	}
\end{lstlisting}
\label{example:serverlesspseudo}
\end{example}
\cref{example:serverlesspseudo} illustrates the usefulness of extraction: a human could manually go through the code and check that the two processes will interact as intended; however, despite it being a greatly simplified example, it is not immediately obvious whether the processes will communicate correctly or not.

Previous methods for extraction use graphs to represent the possible (symbolic) execution space of
the system under consideration. The challenge presented by code as in
\cref{example:serverlesspseudo} is that these graphs are not guaranteed to be finite anymore,
because of the possibility of spawning new processes at runtime.

In this article we introduce the first method for extracting choreographies that supports process spawning, i.e., the capability of creating new processes at runtime.
Our main contribution consists of a theory and implementation of extraction that use name substitutions to obtain finite representations of infinite symbolic execution graphs.
Systems with process spawning have a dynamic topology, which further complicates extraction: new processes can appear at runtime, and can then be connected to other processes to enable communication. We extend the languages used for extraction in previous work with primitives for capturing these features, and show that our method can deal with them.

\paragraph{Structure of the paper.}
In \cref{section:oldalgorithm}, we recap the basic theory of extraction.
In \cref{sec:extraction}, we introduce the languages for representing systems (as networks of processes) and choreographies.
\Cref{section:processspawning} reports our method for extracting networks with an unbounded number of
processes (due to spawning), its implementation and limitations.
We conclude in \cref{sec:concl}.

\paragraph{Related work.}
We have already mentioned most of the relevant related work in this section. Choreography extraction has been explored for languages that include internal computation~\cite{CLMS22}, process terms that correspond to proofs in linear logic~\cite{CMS18}, and session types (abstract terms without internal computation)~\cite{LT12,LTY15}. Our method deals with the first case (the most general among those cited). Our primitives for modelling process spawning are inspired by~\cite{CM17}.

\section{Background}
\label{section:oldalgorithm}
This section summarizes the framework for choreography extraction that we
extend~\cite{CLM17,CLMS22,injava,SafinaPhD}.
The remainder of the article expands upon this work to extend the capabilities of extraction, and to
bring it closer to real systems.

\subsection{Networks}
Distributed systems are modelled as \emph{networks}, which consist of several participants executing
in parallel.
Each participant is called a \emph{process}, and the sequence of actions it executes is called a
\emph{behaviour}.
Each process also includes a set of \emph{procedures}, consisting of a name and associated
behaviour.

Behaviours are formally defined by the grammar given below.
\begin{align*}
  B ::= & \ \nil
  \mid X
  \mid \m{p}!m; B
  \mid \m{p}?; B
  \mid \m{p}\oplus \ell; B
  \mid \m{p}\& \{\ell_1:B_1,\ldots,\ell_n:B_n\} \\
  & \mid \texttt{if }\! e \!\texttt{ then }\! B_1 \texttt{ else }\! B_2
\end{align*}
Term $\nil$ designates a terminated process.
Term $X$ invokes the procedure named $\textup{X}$.
Invoking a procedure must be the last action on a behaviour -- in other words, we only allow tail
recursion.
Term $\m{p}!m; B$ describes a behaviour where the executing process evaluates message $m$ and sends
the corresponding value to process $\m{p}$, then continues as $B$.
The dual term $\m{p}?; B$ describes receiving a message from $\m{p}$, storing it locally and
continuing as $B$.
Behaviour $\m{p}\oplus\ell; B$ sends the selection of label $\ell$ to process $\m{p}$ then continues as
$B$, while the dual $\m{p}\& \{\ell_1:B_1,\ldots,\ell_n:B_n\}$ offers the behaviours
$B_1,\ldots,B_n$ to $\m{p}$, which can be selected by the corresponding labels
$\ell_1,\ldots,\ell_n$.
Finally, $\texttt{if } e \texttt{ then } B_1 \texttt{ else } B_2$ is the conditional term: if the
Boolean expression $e$ evaluates to true, it continues as $B_1$, otherwise, it continues as $B_2$.

The syntax
\begin{lstlisting}[mathescape=true]
$\m{p}$ {
    def $X_1$ { $B_1$ }
    $\vdots$
    def $X_n$ { $B_n$ }
    main { $B$ }
}
\end{lstlisting}
describes a process named $\m{p}$ with local procedures $X_1,\ldots,X_n$ defined respectively as
$B_1,\ldots,B_n$, intending to execute behaviour $B$.

A network is specified as a sequence of processes, all with distinct names, separated by
vertical bars (\lstinline+|+).
\cref{example:onlinestorenetwork} defines a valid network, which we use as running example
throughout this section to explain the existing extraction algorithm.

\begin{example}
  This network describes the protocol for an online store.
  The customer sends in items to purchase, then asks the store to proceed to checkout, or continue
  browsing.

  Once the customer proceeds to checkout, they send their payment information to the store.
  The store then verifies that information, and either completes the transaction, or asks the client
  to re-send payment information if there where a problem.
\begin{lstlisting}[mathescape=true, escapeinside=**]
$\m{customer}$ {
	def *\mi{browse}*{ *\m{store}*!*\mi{item}*; if *\mi{checkout}*
	    then *\m{store}*$\oplus\mi{buy}$; *\mi{purchase}*;
	    else *\m{store}*$\oplus\mi{more}$; *\mi{browse}* }
	def *\mi{purchase}*{ *\m{store}*!*\mi{payment}*; *\m{store}*&{*\mi{accept}*: *\nil*, *\mi{reject}*: *\mi{purchase}*} }
	main{ *\mi{browse}* } 
} |
$\m{store}$ {
	def *\mi{offer}*{ *\m{customer}*?; *\m{customer}*&{*\mi{buy}*: *\mi{payment}*, *\mi{more}*: *\mi{offer}*} } 
	def *\mi{payment}*{ *\m{customer}*?; if *\mi{accepted}*
	    then *\m{customer}*$\oplus\mi{accept}$; *\nil*
	    else *\m{customer}*$\oplus\mi{reject}$; *\mi{payment}* } 
	main { *\mi{offer}* }
}
\end{lstlisting}
\label{example:onlinestorenetwork}
\end{example}

\subsection{Choreographies}
Global descriptions of distributed systems, specifying interactions between participants rather than their individual actions, are called \emph{choreographies}.
Similar to processes in networks, a choreography contains a set of procedure definitions, and a main body.
The terms of choreography bodies are defined by the grammar below and closely correspond to the actions in process behaviours.
\begin{align*}
  C ::= & \ \nil
  \mid X
  \mid \m{p}.m \com \m{q}; C
  \mid \m{p} \com \m{q}[\ell]; C
  \mid \texttt{if } \m{p}.e \texttt{ then } C_1 \texttt{ else } C_2
\end{align*}
Term $\nil$ denotes a choreography body where all processes are terminated.
Term $X$ invokes the procedure with name $\textup{X}$.
In the communication $\m{p}.m \com \m{q}; C$, process \m{p} sends message $m$ to \m{q}, which stores
the result, and the system continues as described by choreography body $C$.
Likewise, in label selection $\m{p} \com \m{q}[\ell]; C$, process \m{p} selects an action in \m{q}
by sending the label $\ell$, and the system continues as $C$.
In the conditional $\texttt{if } \m{p}.e \texttt{ then } C_1 \texttt{ else } C_2$, process \m{p} starts by evaluating the Boolean expression $e$; if this resolves to true, then the choreography continues as $C_1$, otherwise it continues as $C_2$. 

\begin{example}
The protocol described by the network in \cref{example:onlinestorenetwork} can be written as the following choreography.
\begin{lstlisting}[mathescape=true, escapeinside=**]
def *\mi{Buy}* { 
    *\m{customer}*.$item$*\com\m{store}*; if *\m{customer}*.$checkout$
        then *\m{customer}\com\m{store}*$[buy]$; *\mi{Pay}*
        else *\m{customer}\com\m{store}*$[more]$; *\mi{Buy}*
} 
def *\mi{Pay}* { 
    *\m{customer}*.*\mi{payment}\com\m{store}*; if *\m{store}*.*\mi{accepted}*
        then *\m{store}\com\m{customer}*$[accept]$; *\nil* 
        else *\m{store}\com\m{customer}*$[reject]$; *\mi{Pay}*
} 
main {*\mi{Buy}*}
\end{lstlisting}
\label{example:onlinestore-choreography}
\end{example}

\subsection{Extraction algorithm}
The extraction algorithm from~\cite{CLM17,CLMS22} consists of two steps.
The first step is building a graph that represents a symbolic execution of the network.
The second is to traverse this graph, using its edges to build the extracted choreography.

\subsubsection{Graph generation.}
The first step in extracting a choreography from a network is building a \emph{Symbolic Execution
  Graph} (SEG) from the network.
A SEG is a directed graph representing an abstraction of the possible evolutions of the network over
time.
It abstracts from the concrete semantics by ignoring the concrete values being communicated and
considering both possible outcomes for every conditional.
Nodes contain possible states of the network, and edges connect nodes that are related by execution
of one action (the label of the edge).\footnote{The formal details can be found in~\cite{CLM17,CLMS22}.}

Edges in the SEG are labeled by \emph{transition labels},
which represent the possible actions executable by the network: value communications (matching a
send action with the corresponding receive), label selection (matching selection and offer), and
conditionals.
For the last there are two labels, representing the two possible outcomes (the ``then'' and ``else''
branch, respectively).
  \[
  \lambda ::= \m{p}.e \com \m{q}
  \mid \m{p} \com \m{q}[\ell]
  \mid \m{p}.e \texttt{ then} \mid \m{p}.e \texttt{ else}
  \]

As an example, we show how to build the SEG for the network in \cref{example:onlinestorenetwork}
(see \cref{fig:seg3}).
The main behaviours of the two processes are procedure invocations (node on top).
Expanding the corresponding definitions, we find out that the first action by \m{customer} is
sending to \m{store}, while \m{store}'s first action is receiving from \m{customer}.
These actions match, so the network can execute an action reducing both \m{store} and \m{customer}.
This results in a new network, which is placed in a new node, and we connect both nodes by the
transition label describing the executed action.

The next action by \m{customer} is receiving a label from \m{store}, but \m{store} needs to evaluate
a conditional expression to decide which label to send.
There are two possible outcomes for these evaluation, so we create two new nodes and label the edges towards them with
the corresponding possibilities (\texttt{then} or \texttt{else}).
Continuing to expand the \texttt{else} branch leads to a network that is already in the SEG, so we simply
add an edge to the node containing that network.
The \texttt{then} branch evolves in two steps into a second conditional, whose \texttt{else} branch again creates
a loop, while its \texttt{then} branch evolves into a network where all processes has terminated.
This concludes the construction of the SEG.

\begin{figure}[!htb]
\includegraphics[width=\textwidth]{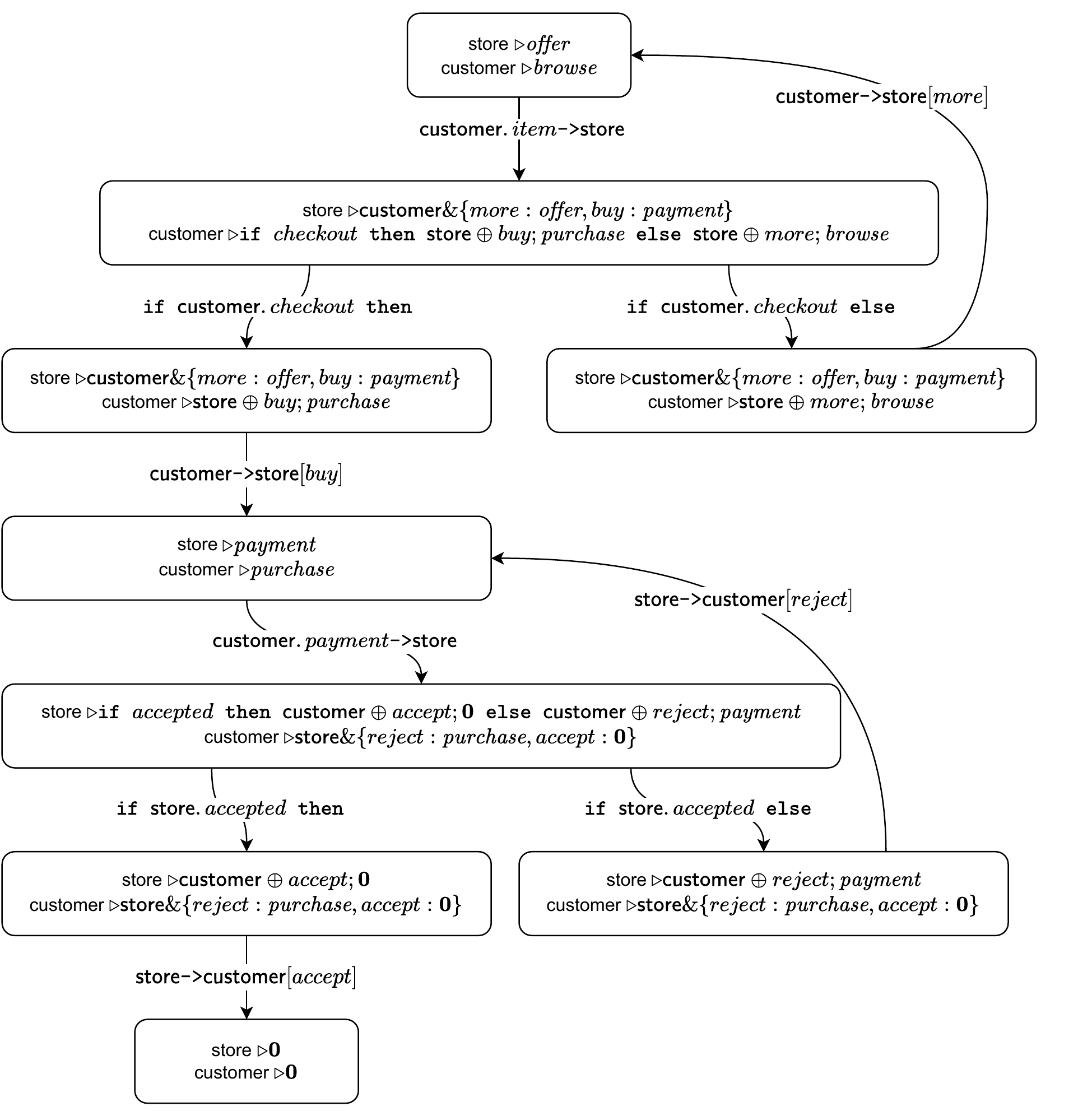}
\caption{The SEG of the network from \cref{example:onlinestorenetwork}.}
\label{fig:seg3}
\end{figure}

In this example SEG generation went flawlessly, but that is not always the case.
We saw that a process trying to send a message must wait for the receiving process to be able to
execute a matching receive; this can lead to situations where the network is \emph{deadlocked} -- no
terms can be executed.
In that case, the behaviour of the network cannot be described by a choreography, and the network
cannot be extracted.

This algorithm also relies on the fact that network execution is confluent: the success of
extraction does not depend on which action is chosen when constructing the SEG, in case several are
possible.
(This can affect the algorithm's performance, though.)
Furthermore, guaranteeing that all possible evolutions of the network are captured requires some
care when closing loops: all processes must reduce in every cycle in the SEG.
This is achieved by marking processes in the network and checking that every loop contains a node
where all processes are marked.
These aspects are orthogonal to the current development, and we refer the interested reader
to~\cite{CLMS22} for details.

\subsubsection{Choreography construction.}
\label{sec:prevchorgen}
The main idea for generating a choreography from a SEG is that edges correspond to choreography
actions, so the choreography essentially describes all paths in the SEG.
The choreographic way of representing loops is by means of procedures, so each loop in the SEG
should become a procedure definition.
To achieve this, we first \emph{unroll} the graph by splitting every \emph{loop node} -- the nodes
that close a loop\footnote{Formally, every node with at least two incoming edges -- plus the
  starting node, if it has any incoming edges} -- into two: a \emph{exit node}, which is the target
of all edges previously pointing to the loop node, and a \emph{entry node}, which is the source of
all edges previously pointing from the loop node.
Entry nodes are given distinct procedure names, and exit nodes are associated with the corresponding
procedure calls.
The unrolled SEG is now a forest with each tree representing a procedure, as shown in
\cref{fig:segunrolled}.
\begin{figure}[!htb]
\includegraphics[width=\textwidth]{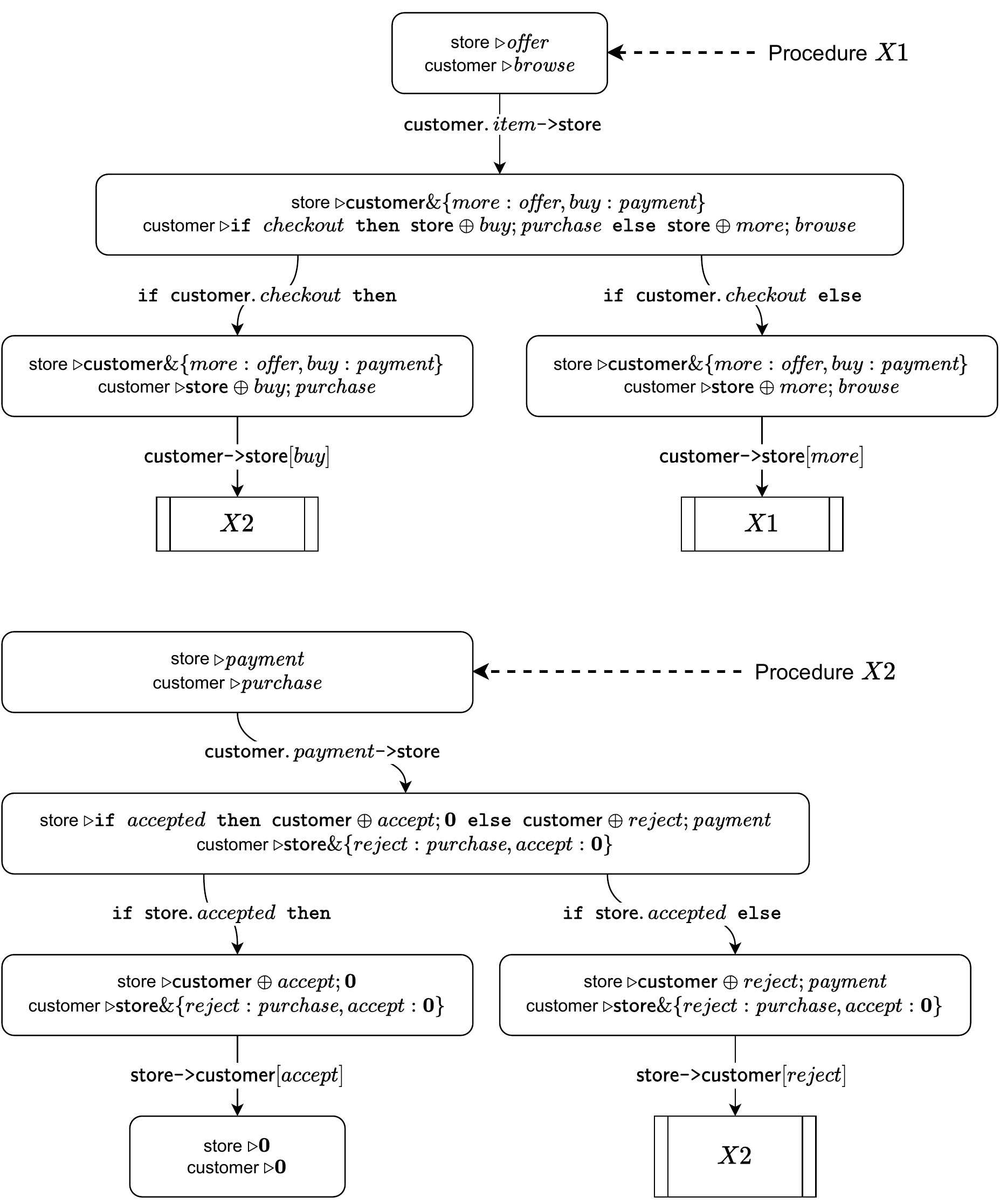}
\caption{The unrolling of the graph of \cref{fig:seg3} resulting in two trees, each corresponding to
  a procedure in the SEG to be extracted. For readibility, the exit node corresponding to the
  invocation of $X2$ is depicted twice.}
\label{fig:segunrolled}
\end{figure}

Since transition labels are similar to the choreography body terms, it is simple to read
choreography bodies directly from each tree of the unrolled graph.
This is done recursively, starting from the root of each tree and proceeding as follows: when
encountering a node with no outgoing edges, then either all processes have terminated, in which case
we return $\nil$, or the node is an exit node, in which case we return the corresponding procedure
invocation.
If there is one outgoing edge, that edge represents an interaction, so we return the choreography
body that starts with the transition label for that edge and continues as the result of the
recursive invocation on the edge's target.
If there are two outgoing edges, then we return a conditional choreography body whose two
continuations are the results of the recursive calls targets of the two edges, as dictated by those
edges' labels.

\begin{example}
  By reading the trees in \cref{fig:segunrolled} in the manner described, we obtain the choreography
  given earlier in \cref{example:onlinestore-choreography}, where procedures $Buy$ and $Pay$ are now
  called $X1$ and $X2$, respectively.
\end{example}

\section{Networks and Choreographies with Process Spawning}
\label{sec:extraction}
In this section, we extend the theories of networks and choreographies with support for spawning new processes at runtime.

We start by adding three primitives to the language of behaviours, following ideas from~\cite{CM17}.

\begin{enumerate}
\item\emph{Spawning of processes.} The language of behaviours is extended with the construct
  $\texttt{spawn } \m{q} \texttt{ with } B_{\m q} \texttt{ continue } B$,
  which adds a new process to the network with a new, unique name.
  The new process gets main behaviour $B_{\m q}$, and inherits its parent's set of procedures, while
  the parent continues executing $B$.
  This term also binds \m{q} (a process variable) in $B$.

\item\emph{Advertising processes.} Since names of newly spawned processes are only known to their
  parents, we need terms for communicating process names.
  Process $\m{p}$ can  ``introduce'' \m{q} and \m{r} to each other (send each of them the other's
  name) by executing term $\m{q} \tricom \m{r}; B$, while $\m{q}$ and $\m{r}$ execute the dual
  actions $\m{p}?\m{t};\; B_{\m q}$ and $\m{p}?\m{t};\; B_{\m r}$.
  Here \m{t} is again a variable, which is bound in the continuations $B_{\m q}$ and $B_{\m r}$.

\item\emph{Parameterised procedures.} To be able to use processes spawned at runtime in procedures,
  their syntax is changed so that they can take process names as parameters.
\end{enumerate}

We do not distinguish process names from process variables syntactically, as this simplifies the
semantics.
We assume as usual that all binders in the same term bind distinct variables, and work up to
$\alpha$-renaming.
However, we allow a variable to occur both free and bound in the same term -- this is essential for
our algorithm.

As previously, the semantics includes a state function $\sigma$, mapping each process to a value
(its memory state).
The new ingredient is a graph of connections $\mathcal G$ between processes, connecting pairs of
process that are allowed to communicate.
The choice of the initial graph allows for modelling different network topologies.
We use the notation $\m{p}\leftrightarrow\m{q}\in\mathcal G$ to denote that \m{p} and \m{q} are
connected in $\mathcal G$, and $\mathcal G\cup\{\m{p}\leftrightarrow\m{q}\}$ to denote the graph
obtained from $\mathcal G$ by adding an edge between \m{p} and \m{q}.

\cref{fig:networksemantics-main} includes some representative rules of this extended
semantics.\footnote{The complete semantics is given in \cref{section:fullproofs}.}
\begin{figure}[!ht]
\[
\infer[\rname{N|Com}]
      {\m{p} \triangleright \m{q}!e; B_1 \mid \m{q} \triangleright \m{p}?; B_2,\sigma,\mathcal G
        \reduce{\m{p}.v \com \m{q}}
        \m{p} \triangleright B_1 \mid \m{q} \triangleright B_2,\sigma[\m{q} \mapsto v],\mathcal G}
      {\m{p} \leftrightarrow \m{q} \in \mathcal G & e \downarrow^\sigma_\m{p} v}
\]

\[
\infer[\rname{N|Intro}]
      {\begin{array}{c}
          \m{p} \triangleright \m{q}\tricom \m{r}; B_\m{p} \mid \m{q} \triangleright \m{p}?\m{r}; B_\m{q} \mid \m{r} \triangleright \m{p}?\m{q}; B_\m{r},\sigma,\mathcal G \\ 
          \reduce{\m{p}.\m{q} \tricom \m{r}}\\
          \m{p} \triangleright B_\m{p} \mid \m{q} \triangleright B_\m{q} \mid \m{r} \triangleright B_\m{r}, \sigma, \mathcal G\cup\{\m{q}\leftrightarrow\m{r}\}
      \end{array}}
      {\m{p}\leftrightarrow\m{q}\in\mathcal G & \m{p}\leftrightarrow\m{r}\in\mathcal G}
\]

\vspace{\baselineskip}
\[
\infer[\rname{N|Spawn}]
      {\begin{array}{c}
          \m{p} \triangleright \texttt{spawn \m{q} with $B_q$ continue $B$},\sigma,\mathcal G \\
          \reduce{\m{p} \texttt{ spawns } \m{q}} \\
          \m{p} \triangleright B \mid \m{q} \triangleright B_q, \sigma, \mathcal G\cup\{\m{p}\leftrightarrow\m{q}\}
      \end{array}}
      {}
\]
\caption{New semantics of networks, selected rules. }
\label{fig:networksemantics-main}
\end{figure}

Rule \rname{N|Com} describes a communication.
Process \m{p} wants to send the result of evaluating $e$ to process \m{q}, and \m{q} is expecting to
receive from \m{p}.
These processes can communicate, and the result $v$ of evaluating $e$ at \m{p} is sent and stored in
\m{q} (premise $e \downarrow^\sigma_\m{p} v$).
The difference from the previous semantics is the presence of the additional premise
$\m{p}\leftrightarrow\m{q}\in\mathcal G$, which checks that these two processes are allowed to communicate.

Rule \rname{N|Intro} is similar, but process names are communicated instead, and the communication
graph is updated.
For simplicity, instead of explicitly substituting variables for process names, we assume that the
behaviours of \m{q} and \m{r} have previous been $\alpha$-renamed appropriately (this kind of simplifications based on $\alpha$-renaming are standard in process calculi~\cite{S95}).

Rule \rname{N|Spawn} creates a new process \m{q} into the network with a unique name, and adds an
edge between it and its parent to the network.
Note that \m{q} is distinct from other process names in the network.

Choreographies get two corresponding actions: $\texttt{\m{p} spawns \m{q}};\; C$ and
$\m{p.q} \tricom \m{r};\; C$.
Procedures also become parameterized.
At the choreography level we do not require process variables except in procedure definitions (which
are replaced by process names when called); we assume that all names of spawned processes are
unique (again treating \texttt{spawn} actions as binders and $\alpha$-renaming in procedure bodies
when needed at invocation time).

The corresponding rules for the semantics are given in \cref{fig:chorsemantics-main}.
\begin{figure}
\[
\infer[\rname{C|Com}]
      {\m{p}.e \com \m{q} ; C, \sigma, \mathcal G
        \reduce{\m{p}.v \com \m{q}}
        C,\sigma[\m{q} \mapsto v],\mathcal G}
      {\m{p}\leftrightarrow\m{q}\in\mathcal G & e \downarrow^\sigma_\m{p} v}
\]

\[
\infer[\rname{C|Intro}]
      {\m{p}.\m{q} \tricom \m{r} ; C, \sigma, \mathcal G
        \reduce{\m{p}.\m{q} \tricom \m{r}}
        C, \sigma, \mathcal G\cup\{\m{q}\leftrightarrow\m{r}\}}
      {\m{p}\leftrightarrow\m{q}\in\mathcal G & \m{p}\leftrightarrow\m{r}\in\mathcal G}
\]

\[
\infer[\rname{C|Spawn}]
      {\m{p} \texttt{ spawns } \m{q} ; C, \sigma, \mathcal G
        \reduce{\m{p} \texttt{ spawns } \m{q}}
        C,\sigma,\mathcal G\cup\{\m{p}\leftrightarrow\m{q}\}}
      {}
\]
\caption{New semantics of choreographies, selected rules.}
\label{fig:chorsemantics-main}
\end{figure}

\begin{example}
  We illustrate a network with process spawning by writing \cref{example:serverlesspseudo} in our
  language.
  The client sends a request to an entry-point, which then spawns an instance to handle the request,
  which gets introduced to the client.
  The instance sends the client the result, and the client either makes another request or
  terminates the connection.
  Additional requests spawn new instances, since requests may differ in kind -- this is handled by
  the previous instance to reduce load on the entry-point.

\begin{minipage}[t]{.45\textwidth}%
\begin{lstlisting}[mathescape=true, escapeinside=**]
$\m{client}$ {
    def $X$(s){
        *\m{s}*?*\m{w}*; *\m{w}*?; if *\mi{more}*
            then *\m{w}*$\oplus next$; $X$(*\m{w}*)
            else *\m{w}*$\oplus end$; $\nil$
    }
    main{ *\m{entry}*!$req$; $X$(*\m{entry}*) }
}
\end{lstlisting}
\end{minipage}
\begin{minipage}[t]{.45\textwidth}%
\begin{lstlisting}[mathescape=true, escapeinside=**]
| $\m{entry}$ {
    def $X$(*\m{this}*){
        spawn *\m{worker}* with
            *\m{this}*?*\m{client}*; *\m{client}*!$res$; *\m{client}*&{ $next$: $X$(*\m{worker}*), $end$: $\nil$ }
        continue *\m{worker}*$\tricom$*\m{client}*; $\nil$
    }
    main{ *\m{client}*?; $X$(*\m{entry}*) }
}
\end{lstlisting}
\end{minipage}

  After the initial communication and procedure call, variable \m{w} in \m{client} needs to be
  renamed to \m{worker} in order for the next communication to reduce.
\label{example:serverlessnetwork}
\end{example}

\section{Extraction with process spawning}
\label{section:processspawning}

In the presence of spawning, the intuitive process of extraction described earlier no longer works:
since a network can generate an unbounded number of new processes, there is no guarantee that the
SEG is finite and, as a consequence, that the procedure terminates.

However, we observe that since networks are finite and can only reduce to networks built from their subterms, there is only a finite number of possible behaviours for processes that are spawned at runtime.
Therefore we can keep SEGs finite if we allow renaming processes when connecting nodes. This intuition is key to our development.

\begin{example}
  Consider the network from \cref{example:serverlessnetwork} and its SEG, shown in
  \cref{fig:segserverless}.
  The dotted part shows the network that would be generated by symbolic execution as described
  earlier.
  By allowing renaming of processes, we can close a loop by applying the mapping
  $\{\m{client}\mapsto\m{client}, \m{entry/worker0}\mapsto\m{entry}\}$.
  The parameters \m{w} and \m{worker} are both variables mapping to \m{entry/worker0}, and since
  \m{entry} has terminated, the remapping makes the dotted node equivalent to the second node of the
  SEG, as shown by the loop.
  For simplicity we only show the process variables that are changed by the mapping, i.e., we omit
  $\m{client}\mapsto\m{client}$.
\end{example}
\begin{figure}[!htb]
\includegraphics[width=\textwidth]{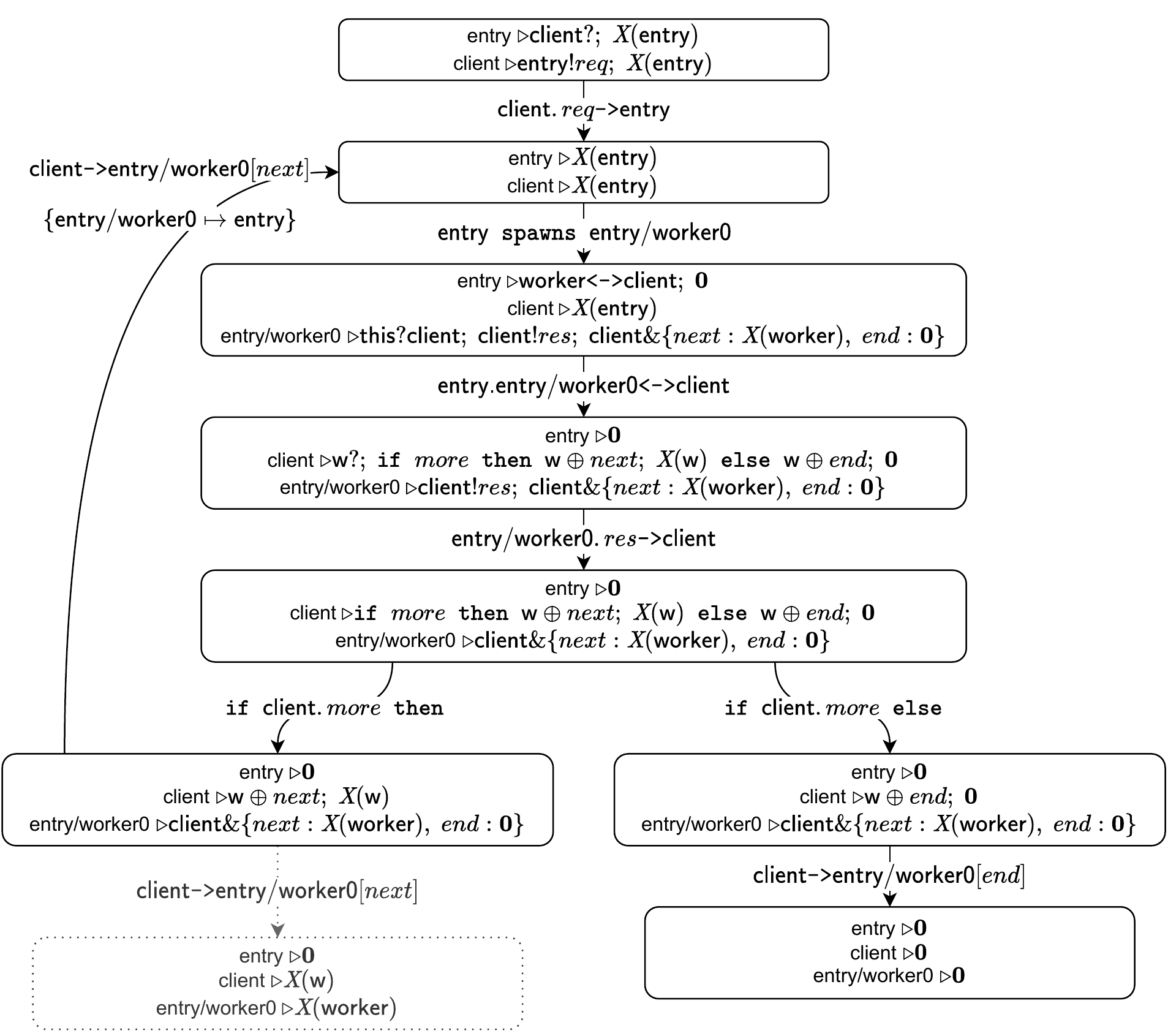}
\caption{The SEG of the network in \cref{example:serverlessnetwork}.}
\label{fig:segserverless}
\end{figure}

\subsection{Generating SEGs}
Formally, we define an abstract semantics for networks that makes two changes with respect to the
concrete semantics given above.
First, we remove all information about states $\sigma$ (and as a consequence about the actual values
being communicated), as in~\cite{CLM17}.
Secondly, we now treat \emph{all} process names as variables, and replace the communication graph
$\mathcal G$ by a partial function $\gamma$ mapping pairs of a process name and a process variable
to process names: intuitively, $\gamma(\m{p},\m{q})$ returns the name of the actual process that
\m{p} locally identifies as $\m{q}$.
If $\gamma(\m{p},\m{q})$ is undefined, then $\m{p}$ does not know who to communicate with.
We assume that initially $\gamma(\m{p},\m{p})=\m{p}$ for all \m{p}, and that, for all \m{p} and
\m{q}, either $\gamma(\m{p},\m{q})=\m{q}$ (meaning that $\m{p}$ knows $\m{q}$'s name and is allowed
to communicate with it) or $\gamma(\m{p},\m{q})$ is undefined (meaning that $\m{p}$ is not connected
to \m{q} and cannot communicate with it) -- this allows us to model different initial network
topologies.
\cref{fig:abstractnetworksemantics} shows the abstract versions of the rules previously shown.
\begin{figure}[!ht]
\[
\infer[\rname{N|Com}]
      {\m{p} \triangleright \m{r}!e; B_1 \mid \m{q} \triangleright \m{s}?; B_2,\gamma
        \reduce{\m{p}.e \com \m{q}}
        \m{p} \triangleright B_1 \mid \m{q} \triangleright B_2,\gamma}
      {\m{r} \downarrow^\gamma_\m{p} \m{q} \quad \m{s} \downarrow^\gamma_\m{q} \m{p}}
\]

\[
\infer[\rname{N|Intro}]
      {\begin{array}{c}
          \m{p} \triangleright \m{s}\tricom \m{t}; B_\m{p} \mid \m{q} \triangleright \m{u}?\m{w}; B_\m{q} \mid \m{r} \triangleright \m{v}?\m{x}; B_\m{r},\gamma \\ 
          \reduce{\m{p}.\m{q} \tricom \m{r}}\\
          \m{p} \triangleright B_\m{p} \mid \m{q} \triangleright B_\m{q} \mid \m{r} \triangleright B_\m{r}, \gamma[\langle \m{q,w} \rangle \mapsto \m{r}][\langle \m{r,x} \rangle \mapsto \m{u}]
      \end{array}}
      {\m{s} \downarrow^\gamma_\m{p} \m{q} \quad \m{t} \downarrow^\gamma_\m{p} \m{r} \quad \m{u} \downarrow^\gamma_\m{q} \m{p} \quad \m{v} \downarrow^\gamma_\m{r} \m{p}}
\]

\vspace{\baselineskip}
\[
\infer[\rname{N|Spawn}]
      {\begin{array}{c}
          \m{p} \triangleright \texttt{spawn \m{q} with $B_q$ continue $B$},\gamma \\
          \reduce{\m{p} \texttt{ spawns } \m{q}} \\
          \m{p} \triangleright B \mid \m{r} \triangleright B_q, \gamma[\langle \m{p,q} \rangle \mapsto \m{r}][\langle\m{r,p}\rangle\mapsto\m{p}]
      \end{array}}
      {} 
\]
\caption{Abstract semantics of networks, selected rules}
\label{fig:abstractnetworksemantics}
\end{figure}

\begin{example}
  It is easy to check that the SEG shown in \cref{fig:segserverless} follows the rules in
  \cref{fig:abstractnetworksemantics}.
  Initially, the variable mapping is the identity, and this remains unchanged after the first
  communication.

  \begin{center}\begin{tabular}{c|cccc}
    $\gamma$      &\ \m{entry}\ &\ \m{client}\ &\ \m{worker}\ &\ \m{w}\ \\ \hline
      \m{entry}   & \m{entry} & \m{client} & ---        & ---   \\
    \ \m{client}\ & \m{entry} & \m{client} & ---        & ---
  \end{tabular}\end{center}

  When \m{entry} spawns the new process \m{entry/worker0}, this name is associated to \m{entry}'s
  local variable \m{worker} according to rule \rname{N|Spawn}.
  So the variable mapping is now given by the following table.

  \begin{center}\begin{tabular}{c|cccc}
    $\gamma$          & \m{entry} & \m{client} & \m{worker}        &\ \m{w}\ \\ \hline
    \m{entry}         & \m{entry} &\ \m{client}\ &\ \m{entry/worker0}\ & ---   \\
    \m{client}        & \m{entry} & \m{client} & ---               & ---   \\
    \ \m{entry/worker0}\ &\ \m{entry}\ & ---        & ---               & ---
  \end{tabular}\end{center}

  Next, \m{entry} introduces \m{entry/worker0} and \m{client} to each other; \m{client} uses the
  local name \m{w} for the new process.
  According to rule \rname{N|Intro}, the variable mapping is now the following.

  \begin{center}\begin{tabular}{c|cccc}
    $\gamma$          & \m{entry} & \m{client} & \m{worker}        & \m{w} \\ \hline
    \m{entry}         & \m{entry} & \m{client} &\ \m{entry/worker0}\ & ---   \\
    \m{client}        & \m{entry} & \m{client} & ---               &\ \m{entry/worker0}\ \\
    \ \m{entry/worker0}\ &\ \m{entry}\ &\ \m{client}\ & ---               & ---
  \end{tabular}\end{center}
\end{example}

To determine whether we can close a loop in the SEG, we need the following definitions.
\begin{definition}
  Two networks $N$ and $N'$ are \emph{equivalent} if there exists a total bijective mapping $M$ from
  processes in $N$ to processes in $N'$, such that, for all processes \m{p}:
  \begin{itemize}
  \item if \m{p} has main behaviour $B$, then $M(\m{p})$ has main behaviour $M(B)$ (where $M$ is
    extended homeomorphically to behaviours);
  \item if \m{p} has not terminated and $X(\tilde{\m{q}})=B$ is a procedure definition in \m{p},
    then $X=M_{\m{q}}(B)$ is a procedure definition in $M(\m{p})$, where $M_{\m{q}}$ maps every
    process in $\tilde{\m{q}}$ to itself and every other process \m{r} to $M(\m{r})$.
  \end{itemize}
\end{definition}

To close a loop in the SEG, we also need to look at the variable mappings $\gamma$.

\begin{definition}
  Two nodes $N,\gamma$ and $N',\gamma'$ in a SEG are \emph{behaviourally equivalent} if:
  \begin{itemize}
  \item There exists a series of reductions $\tilde{\lambda}$ in the SEG such that
    $N,\gamma\reduce{\tilde{\lambda}}^* N',\gamma'$;
  \item There exists a mapping function $M$ that proves $N$ and $N'$ are equivalent.
  \item If $M(\m{p})=\m{q}$, $\gamma (\m{p,a})=\m{b}$,
    $\gamma'(\m{q,a})=\m{c}$, and there is a reduction accessible from
    $N,\gamma$ that evaluates $\gamma^*(\m{p,a})$ for some intermediary $\gamma^*$, then $M(\m{b})=\m{c}$.
  \end{itemize}
  \label{def:nodeequal}
\end{definition}

The last point of the definition only applies to variables actually evaluated in reductions of $N$.
This makes extraction more efficient (as more nodes are equivalent): a variable might have been used
for a previous step in the evolution up to $N,\sigma$, but if it remains unused thereafter it does
not affect the behaviour anymore, and can be ignored.

\begin{lemma}
  Let $N,\gamma$ and $N',\gamma'$ be behaviourally equivalent nodes in a SEG for a given network.
  The graph obtained by redirecting all edges coming into $N',\gamma'$ to $N,\gamma$ and removing
  the nodes that are no longer accessible from the root is also a SEG for the original network.
\label{lemma:behaviouralloops}
\end{lemma}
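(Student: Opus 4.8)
The plan is to treat ``being a SEG for the original network'' as a conjunction of defining conditions and to verify each of them for the transformed graph, which I will call $G'$, given that it held for the original SEG $G$. Write $M$ for the bijection witnessing that $N$ and $N'$ are equivalent, and recall that behavioural equivalence additionally supplies a path $N,\gamma \reduce{\tilde{\lambda}}^* N',\gamma'$ inside $G$ together with the $\gamma$-compatibility clause relating the two variable mappings through $M$. The only structural change is on the edges incoming to $N',\gamma'$, which are rerouted to $N,\gamma$, after which nodes no longer reachable from the root are discarded; every other edge, and in particular every outgoing edge of every surviving node (including $N,\gamma$ itself, whose out-edges are untouched), is carried over verbatim.

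The heart of the argument is to show that each rerouted edge is a legitimate SEG edge. Consider an edge $n \reduce{\lambda} N',\gamma'$ of $G$; since abstract reduction is deterministic on the chosen action $\lambda$, executing $\lambda$ at $n$ yields exactly $N',\gamma'$, so after rerouting I must justify that $N,\gamma$ validly represents $N',\gamma'$ as a node. This is precisely what behavioural equivalence provides, read through the bijection $M^{-1}$: the network-equivalence clause guarantees that the two networks have matching main behaviours and procedure definitions up to renaming, and the $\gamma$-clause guarantees that, whenever a reduction reachable from these nodes evaluates a variable, the two mappings agree after applying $M$. I would make this precise by checking that $M$ is a bisimulation between the continuations of $N,\gamma$ and $N',\gamma'$: for every action enabled at one node there is a correspondingly-renamed action enabled at the other leading again to $M$-related, $\gamma$-compatible nodes, with the cases following the abstract rules of \cref{fig:abstractnetworksemantics}. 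The communication case uses the $\gamma$-clause to see that $\m{r}\downarrow^\gamma_\m{p}\m{q}$ and its $M$-image resolve to $M$-related partners; the introduction and spawn cases additionally require tracking the updates $\gamma[\langle\m{q,w}\rangle\mapsto\m{r}]$ and $\gamma[\langle\m{p,q}\rangle\mapsto\m{r}]$, verifying that the extended mappings stay compatible under $M$ (suitably extended on the freshly spawned names). This establishes that the folded subgraph faithfully mirrors, under $M$, the continuations that the pruning removes.

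The remaining conditions are comparatively routine. Pruning deletes only nodes unreachable from the root in $G'$, and a surviving node's successors are themselves reachable, so $G'$ contains no dangling edges; consequently every surviving node retains exactly its original set of outgoing actions, and since $G$ captured all possible evolutions of each node's network, so does $G'$ -- the deleted material is a redundant copy of structure already reachable below $N,\gamma$ via $M$. Finiteness is immediate, as the node set can only shrink, and the root is unchanged (in the degenerate case where $N',\gamma'$ is itself the root, $N,\gamma$ survives and takes its place, still representing the initial network since they are equivalent). Finally, the rerouting turns the path $N,\gamma\reduce{\tilde{\lambda}}^* N',\gamma'$ into a new cycle through $N,\gamma$; the liveness (marking) requirement on this cycle is discharged exactly as for ordinary loop closure and is treated as orthogonal to the present construction, as discussed in \cref{section:oldalgorithm}.

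I expect the main obstacle to be the bisimulation step in the second paragraph, specifically the bookkeeping of the variable mappings under \rname{N|Intro} and \rname{N|Spawn}. The difficulty is that these rules mutate $\gamma$, so I must confirm that $M$-compatibility of the mappings is preserved by exactly these updates and, crucially, that the ``only variables actually evaluated'' relaxation noted after \cref{def:nodeequal} is sufficient: a variable that differs between $\gamma$ and $\gamma'$ but is never evaluated along any continuation must not be allowed to invalidate the match, and I would need to show that this relaxation is consistent with the inductive correspondence rather than breaking it.
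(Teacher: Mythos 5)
Your proposal is sound and, at its core, leans on the same key fact as the paper's proof---that network equivalence plus the $\gamma$-compatibility clause of \cref{def:nodeequal} lets reductions be transported through $M$---but the architecture is genuinely different. The paper's proof is a trace-replay argument: it takes the specific SEG path $\tilde{\lambda}$ from $N,\gamma$ to $N',\gamma'$, renames it through $M$ to obtain $\tilde{\lambda'}$, shows $N',\gamma'\reduce{\tilde{\lambda'}}^*N'',\gamma''$ with $N'',\gamma''$ again behaviourally equivalent, and iterates ($\tilde{\lambda''}$, $\tilde{\lambda'''}$, \dots), concluding co-inductively that the closed loop generates exactly the infinite unfolding; it never explicitly verifies the SEG well-formedness conditions nor performs a per-rule case analysis. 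You instead check the defining conditions of SEG-hood for the surgered graph one by one and ground the rerouted edges in a step-wise bisimulation between the continuations of $N,\gamma$ and $N',\gamma'$, with case analysis on \rname{N|Com}, \rname{N|Intro} and \rname{N|Spawn} and explicit tracking of the $\gamma$-updates and fresh-name extensions of $M$. Your route is strictly more general---the paper's replay of $\tilde{\lambda}$ falls out as the iteration of your bisimulation along one path---and it is more honest about the bookkeeping (dangling edges, the root case, finiteness, and the fact that the folded graph is a SEG only up to the renaming $M$ recorded on the loop edge, which the paper handles implicitly through the renamed labels). The cost is that you prove more than the lemma needs: since a SEG fixes a single execution order, only the SEG path below $N,\gamma$ must be transported, which is why the paper's leaner argument suffices. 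Finally, the obstacle you flag---whether the ``only variables actually evaluated'' relaxation is consistent with the inductive correspondence---is precisely what the quantification over intermediary mappings $\gamma^*$ in \cref{def:nodeequal} is designed to ensure, and the paper's own proof asserts this with the same brevity you do, so your flag is apt but does not mark a gap relative to the paper's standard of rigour.
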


Our implementation simply looks for a suitable $N,\gamma$ when $N',\gamma'$ is generated, and
records the mapping $M$ in the edge leading to $N,\gamma$.

\subsubsection{Detecting unrestricted spawning.}
It is possible to create a network where processes are spawned in a loop, faster than they
terminate, making the number of processes increase for every iteration. 
Such networks embody a resource leak, and they cannot be extracted by our theory.
To ensure termination, our algorithm must be able to detect resource leaks, which is an undecidable
problem.
We deal with it as follows: when a new candidate node is generated, we check whether there is a
\emph{surjective} mapping with the properties described above such that there are at least two
process values mapped to the same process name.
If this happens, the algorithm returns failure.
We include some examples of networks with resource leaks in \cref{section:leakexamples}.

\subsection{Generating the choreography}
Building a choreography from a SEG is similar to the original case.
The main change deals with procedure calls: we use the variable mappings in edges that close loops
to determine their parameters and arguments.

When unrolling the SEG, each node corresponding to a procedure definition gets a list of parameters
corresponding to the process names\footnote{Assuming some predefined ordering of process names.} that
appear in the co-domain of any variable mapping in an edge leading to that node.
Each procedure call is then appended to the reverse images of these processes by the mapping in the
edge leading to it.
Note that the edges do not contain process names that are mapped to themselves; as such, processes
that are the same in all maps will not appear as arguments to the extracted procedure.
A consequence of this is that some procedures may get an empty set of arguments.

After this transformation the choreography can again be extracted by recursively traversing the
resulting forest.

We formulate the correctness of our extraction procedure in terms of strong bisimilarity~\cite{S11}.

\begin{theorem}
  \label{thm:bisim}
  If $C$ is a choreography extracted from a network $N$, then $C \sim N$.
\end{theorem}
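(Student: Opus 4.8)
The plan is to exhibit an explicit strong bisimulation $\mathcal R$ relating network states to choreography states and to verify the two transfer conditions. Let $G$ be the SEG produced by extraction, after loop-closing and unrolling, with root $r$, so that $N = N_r$ and $C = \extract{r}$. For every node $v$ of $G$, write $N_v,\gamma_v$ for the network it carries and $\extract{v}$ for the choreography body read off by the traversal of the previous subsection. I would take $\mathcal R$ to be the smallest relation containing all pairs $\bigl(\extract{v}, (N_v,\gamma_v)\bigr)$ and closed under (i) unfolding of procedure definitions on the choreography side together with the matching unfolding of network procedure calls, and (ii) the process renamings $M$ recorded on the loop-closing edges of \cref{lemma:behaviouralloops}. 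Since $(\extract{r}, N_r) = (C,N) \in \mathcal R$, it suffices to prove that $\mathcal R$ is a strong bisimulation.

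The backbone of the argument is a single structural observation, proved by inspection of the graph-generation and choreography-generation rules: each edge $v \reduce{\lambda} v'$ of $G$ is, by construction, exactly an abstract-network reduction $N_v,\gamma_v \reduce{\lambda} N_{v'},\gamma_{v'}$; and the traversal is defined so that the outgoing edges of $v$ determine the top-level form of $\extract{v}$, giving $\extract{v} \reduce{\lambda} \extract{v'}$ for precisely those same edges. Hence, for the three shapes a node can have, the transitions line up on the nose: a sink node is either $\nil$ on both sides (no moves) or an exit node (treated below); a node with one outgoing edge yields a single shared action (a communication, selection, spawn, or introduction) whose label is identical on both sides; and a node with two edges is a conditional, whose \texttt{then}/\texttt{else} edges match the two branches of both the network and the choreography conditional. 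The only delicate point in this part is soundness in the presence of concurrency: the SEG records just one enabled action per node, so a network move $\mu$ distinct from the recorded edge $\lambda$ must be matched by first performing $\lambda$ and then reordering. Here I would invoke the confluence of the abstract semantics together with the choreography's out-of-order (swap) structural rules, using a one-step diamond lemma to show that $\lambda$ and $\mu$ commute and that the resulting states remain $\mathcal R$-related.

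The crux is the loop case. When $v'$ is an exit node closing a loop to an entry node $u$, \cref{lemma:behaviouralloops} tells us that $v'$ and $u$ are behaviourally equivalent via a bijection $M$, and $\extract{v'}$ is a procedure call $X_u(\tilde{\m p})$ whose arguments are computed from $M$ as described in the previous subsection. I would prove a substitution lemma stating that unfolding $X_u(\tilde{\m p})$ yields exactly $M$ applied to the choreography obtained by unfolding at $u$, and, symmetrically, that the network procedure call in $N_{v'}$ unfolds to $M$ of the network at $u$. The parameters of $X_u$ are the names appearing in the codomain of the loop-edge maps and the call supplies their $M$-preimages, so after the renaming the two bodies coincide up to the names that $M$ fixes; the third clause of \cref{def:nodeequal} then guarantees that the residual mismatch between $\gamma_{v'}$ and $\gamma_u$ is confined to variables that are never evaluated again, so no future transition can observe it. Consequently the unfolded pair is again of the form $\bigl(\extract{w}, (N_w,\gamma_w)\bigr)$ modulo $M$, hence in $\mathcal R$.

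I expect this last step to be the main obstacle. The difficulty is to show rigorously that the parameter-and-argument computation of extraction faithfully realises the bijection $M$: this forces one to juggle $M$-renaming against the $\alpha$-renaming of the fresh names introduced by \rname{N|Spawn} and \rname{N|Intro} and against the pointwise $\gamma$-updates those rules perform, and it is precisely the ``only variables actually evaluated'' refinement of \cref{def:nodeequal} that makes the connection maps reconcilable. Once this substitution lemma and the diamond lemma for concurrency are in place, the transfer conditions follow from the case analysis above, so $\mathcal R$ is a strong bisimulation containing $(C,N)$, and therefore $C \sim N$.
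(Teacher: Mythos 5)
There is a genuine gap, and it sits exactly where you flag the ``only delicate point'': concurrency. Your relation $\mathcal R$ --- pairs $\bigl(\extract{v},(N_v,\gamma_v)\bigr)$ closed under unfolding and the loop renamings $M$ --- is too small to be a bisimulation, and the claim that after an off-schedule move ``the resulting states remain $\mathcal R$-related'' is false for the relation as you defined it. If at node $v$ the recorded edge is $\lambda$ but the network fires an independent enabled action $\mu$, strong bisimulation forces the choreography to answer with a $\mu$-labelled move \emph{now} (which it can, via \rname{C|Struct} and the swap rules), and the resulting pair --- the network after $\mu$ but before $\lambda$, the choreography rewritten out of order --- corresponds to no node of the SEG, nor to any node up to unfolding or loop renaming. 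A one-step diamond only shows the two sides can re-converge after also performing $\lambda$; bisimilarity requires the intermediate pair itself to be in the relation, and from it arbitrarily many further out-of-order moves may follow, drifting unboundedly far from the canonical schedule (and, when spawn actions are swapped, changing which fresh names are generated, so a renaming $\theta$ of spawned processes must be threaded through --- your proposal only handles renaming at loop edges). Closing $\mathcal R$ under all such residuals is possible, but then verifying the transfer conditions for the closed relation is a genuinely multi-step permutation argument, not a one-step diamond. A secondary mismatch: the theorem concerns the concrete semantics with states $\sigma$ and connection graphs $\mathcal G$, whereas your relation pairs choreographies with abstract configurations $(N_v,\gamma_v)$; the correspondence between the two must be established, not assumed.

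The paper's proof is organised precisely to avoid your difficulty. It fixes the single canonical trace $\{\lambda_i\}_{i\in I}$ that the SEG induces on $C$, shows $N$ can follow it (\cref{lemma:networksequence}, where the issue you isolate in your loop case is resolved simply by composing the variable mappings along the path from the root --- no separate substitution lemma is needed), shows that the enabled actions of $C_i$ and $N_i$ coincide at every point of that trace (\cref{lemma:chorreduceiffnetworkreduce}, via a minimal-index swap argument), and proves that \emph{every} finite trace of either system is a swap-permutation of a prefix of the canonical one, up to renaming of spawned processes (\cref{lemma:chorpermutation,lemma:networkpermutation}). The bisimulation is then defined trace-theoretically: $C'\mathrel{\mathcal R}N'$ iff some common label sequence reaches both from the respective roots. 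This relation automatically contains all the off-graph intermediate states your relation misses, and \cref{lemma:applicablecoincides} (induction on the number of swaps) supplies the transfer conditions. Your structural observation that SEG edges are simultaneously network reductions and choreography reductions is correct and is implicitly the base case of this development, but by itself it only treats the scheduled interleaving; the substance of the paper's proof is the permutation machinery that your sketch compresses into an unproven closure claim.
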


\begin{example}
  We return to the network in \cref{example:serverlessnetwork}, whose SEG was shown in
  \cref{fig:segserverless}.
  The only loop node has two incoming edges, one with empty (identity) mapping and another with
  $\{\m{entry/worker0}\mapsto\m{entry}\}$.
  Therefore this node is extracted to a procedure $X1$ with one process variable \m{entry}.
  The original call simply instantiates this parameter as itself, while the recursive call replaces
  it with \m{entry/worker0}.
  
  The choreography extracted from this SEG is thus the following.
\begin{lstlisting}[mathescape=true, escapeinside=**]
def $X1$(*\m{entry}*) { 
    *\m{entry}* spawns *\m{entry/worker0}*; *\m{entry.entry/worker0}*$\tricom$*\m{client}*;
    *\m{entry/worker0}*.$res\com$*\m{client}*;
    if *\m{client}*.*\mi{more}*
        then *\m{client}*$\com$*\m{entry/worker0}*$[next]$; $X1$(*\m{entry/worker0}*) 
        else *\m{client}*$\com$*\m{entry/worker0}*$[end]$; $\nil$ 
} 
main { *\m{client}*.$req\com$*\m{entry}*; $X1$(*\m{entry}*) }
\end{lstlisting}
\end{example}

\subsection{Implementation and limitations}
\label{sec:limitations}
The extension of the original extraction algorithm to networks with process spawning has been
implemented in Java.
It can successfully extract the networks in the examples given here, as well as a number of randomly
generated tests following the ideas from~\cite{CLMS22}.
Due to space constraints, we do not report on the details of our testing strategy, which
is an extension of the strategy presented in detail in~\cite{CLMS22}, extended in the natural way to
include networks with process spawning and introduction.

Since our language only allows for tail recursion, divide-and-conquer algorithms such as mergesort are
currently still not extractable, and our next plan is to extend the algorithm to deal with general
recursion.
This is not a straightforward extension, as our way of constructing the SEG has no way of getting
past a potentially infinite recursive subterm to its continuation.\footnote{This was also the reason
  for only including tail recursion in the original work~\cite{CLM17}.}

Another example of an unextractable network, which does not use general recusion, is the following.
\begin{lstlisting}[mathescape=true, escapeinside=**]
*\m{s}* {
	def $X$(*\m{p,t}*){ 
		if $cont$ then spawn *\m{q}* with $X$(*\m{t,q}*) continue *\m{q}*?; *\m{p}*!$m$; $\nil$ else *\m{p}*!$m$; $\nil$
	} 
	main{ $X$(*\m{p,s}*) } 
} |
p{s?; stop}
\end{lstlisting}
Although the spawned processes behave as their parent, the entire network never repeats itself, and
extraction fails: extracting a choreography would require closing a loop where some processes did
not reduce.
This is essentially the same limitation already discussed in~\cite{CLMS22}, and cannot be avoided:
given that the problem of determining whether a network can be represented by a choreography is in
general undecidable~\cite{CLM17}, soundness of our algorithm implies that such networks will
always exist.

\section{Conclusion}
\label{sec:concl}
We showed how the state-of-the-art algorithm for choreography extraction~\cite{CLM17,CLMS22} could
be extended to accommodate for networks with process spawning.
This adaptation requires allowing processes names to change dynamically, so that the total number of
networks that needs to be consider remains finite.
The resulting theory captures examples including loops where processes that are spawned at runtime
take over for other processes that terminate in the meantime.
This extension also required adding parameterised procedures to the network and choreography
language, and including a form of resource leak detection to ensure termination.

A working implementation of choreography extraction with process spawning is available at \cite{implementation}.

\paragraph{Acknowledgements.} This work was partially supported by Villum Fonden, grant no.\ 29518.

\bibliography{sources}

\begin{thebibliography}{10}

\bibitem{CMS18}
Marco Carbone, Fabrizio Montesi, and Carsten Sch{\"{u}}rmann.
\newblock Choreographies, logically.
\newblock {\em Distributed Comput.}, 31(1):51--67, 2018.

\bibitem{CLM17}
Lu{\'{\i}}s Cruz{-}Filipe, Kim~S. Larsen, and Fabrizio Montesi.
\newblock The paths to choreography extraction.
\newblock In Javier Esparza and Andrzej~S. Murawski, editors, {\em Proceedings
  of FoSSaCS}, volume 10203 of {\em Lecture Notes in Computer Science}, pages
  424--440, 2017.

\bibitem{CLMS22}
Lu{\'{\i}}s Cruz{-}Filipe, Kim~S. Larsen, Fabrizio Montesi, and Larisa Safina.
\newblock Implementing choreography extraction.
\newblock {\em CoRR}, abs/2205.02636, 2022.
\newblock Submitted for publication.

\bibitem{CM17}
Lu{\'{\i}}s Cruz{-}Filipe and Fabrizio Montesi.
\newblock Procedural choreographic programming.
\newblock In Ahmed Bouajjani and Alexandra Silva, editors, {\em Proceedings of
  FORTE}, volume 10321 of {\em Lecture Notes in Computer Science}, pages
  92--107. Springer, 2017.

\bibitem{CM20}
Lu{\'{\i}}s Cruz{-}Filipe and Fabrizio Montesi.
\newblock A core model for choreographic programming.
\newblock {\em Theor. Comput. Sci.}, 802:38--66, 2020.

\bibitem{HYC16}
Kohei Honda, Nobuko Yoshida, and Marco Carbone.
\newblock {Multiparty Asynchronous Session Types}.
\newblock {\em J. {ACM}}, 63(1):9, 2016.

\bibitem{msc}
{International Telecommunication Union}.
\newblock Recommendation \mbox{Z.120}: Message sequence chart, 1996.

\bibitem{injava}
Bj\o{}rn~Angel Kj\ae{}r.
\newblock {\em Implementing Choreography Extraction in Java}.
\newblock Bachelor thesis, University of Southern Denmark, 2020.

\bibitem{implementation}
Bjørn~Angel Kjær.
\newblock Choreographic extractor, May 2022.

\bibitem{LT12}
Julien Lange and Emilio Tuosto.
\newblock Synthesising choreographies from local session types.
\newblock In Maciej Koutny and Irek Ulidowski, editors, {\em {CONCUR} 2012 -
  Concurrency Theory - 23rd International Conference, {CONCUR} 2012, Newcastle
  upon Tyne, UK, September 4-7, 2012. Proceedings}, volume 7454 of {\em Lecture
  Notes in Computer Science}, pages 225--239. Springer, 2012.

\bibitem{LTY15}
Julien Lange, Emilio Tuosto, and Nobuko Yoshida.
\newblock From communicating machines to graphical choreographies.
\newblock In Sriram~K. Rajamani and David Walker, editors, {\em Proceedings of
  the 42nd Annual {ACM} {SIGPLAN-SIGACT} Symposium on Principles of Programming
  Languages, {POPL} 2015, Mumbai, India, January 15-17, 2015}, pages 221--232.
  {ACM}, 2015.

\bibitem{M13:phd}
Fabrizio Montesi.
\newblock {\em Choreographic Programming}.
\newblock {Ph.{D}. Thesis}, IT University of Copenhagen, 2013.
\newblock
  \url{https://www.fabriziomontesi.com/files/choreographic-programming.pdf}.

\bibitem{M22}
Fabrizio Montesi.
\newblock {\em {Introduction to Choreographies}}.
\newblock Accepted for publication by Cambridge University Press, 2022.

\bibitem{bpmn}
{Object Management Group}.
\newblock {B}usiness {P}rocess {M}odel and {N}otation.
\newblock \url{http://www.omg.org/spec/BPMN/2.0/}, 2011.

\bibitem{uml}
{Object Management Group}.
\newblock Unified modelling language, version 2.5.1, 2017.

\bibitem{SafinaPhD}
Larisa Safina.
\newblock {\em Formal Methods and Patterns for Microservices}.
\newblock PhD thesis, University of Southen Denmark, 2019.

\bibitem{S95}
Davide Sangiorgi.
\newblock Pi-i: {A} symmetric calculus based on internal mobility.
\newblock In Peter~D. Mosses, Mogens Nielsen, and Michael~I. Schwartzbach,
  editors, {\em TAPSOFT'95: Theory and Practice of Software Development, 6th
  International Joint Conference CAAP/FASE, Aarhus, Denmark, May 22-26, 1995,
  Proceedings}, volume 915 of {\em Lecture Notes in Computer Science}, pages
  172--186. Springer, 1995.

\bibitem{S11}
Davide Sangiorgi.
\newblock {\em Introduction to Bisimulation and Coinduction}.
\newblock Cambridge University Press, 2011.

\end{thebibliography}

\appendix

\section{Full definitions and proofs}
\label{section:fullproofs}
For completeness, we include the full definitions of network and choreography semantics, as well as
the proofs of our main results.

\subsection{Networks}
Behaviours are defined by the following grammar.
\begin{align*}
  B ::= & \ \nil
  \mid X
  \mid \m{p}!m; B
  \mid \m{p}?; B
  \mid \m{p}\oplus \ell; B
  \mid \m{p}\& \{\ell_1:B_1,\ldots,\ell_n:B_n\} \\
  \mid \m{q} \tricom \m{r}; B
  &
  \mid \m{p}?\m{t};\; B_{\m q}
  \mid \texttt{if }\! e \!\texttt{ then }\! B_1 \texttt{ else }\! B_2
  \mid \texttt{spawn } \m{q} \texttt{ with } B_{\m q} \texttt{ continue } B
\end{align*}
A process consists of a set of procedure definitions $\mathcal{D}_\m{p}$ together with a (main)
behaviour $B$, also written
$\m{p}\triangleright\{\texttt{def }X_i(\tilde{\m{p}})=B_i\}_{i\in I} \texttt{ in } B$.
A network is a finite set of processes running in parallel, together with a state $\sigma$ mapping
pairs of a process name and variable name to the process's variable's value, and an undirected graph
$\mathcal G$ whose nodes are the names of the processes in the network.

We use the following notations:
\begin{itemize}
\item $\sigma[\langle \m{p},x\rangle \mapsto v]$ denotes the update of $\sigma$ where \m{p}'s
  variable $x$ now maps to value $v$;
\item $e \downarrow^\sigma_\m{p} v$ denotes evaluating expression $e$ at process \m{p} under state
  $\sigma$ reduces to value $v$;
\item $\m{p}\leftrightarrow\m{q}\in\mathcal G$ denotes that there is an edge between \m{p} and \m{q}
  in $\mathcal G$;
\item $\mathcal G\cup\{\m{p}\leftrightarrow\m{q}\}$ denotes the graph obtained from $\mathcal G$ by
  adding an edge between \m{p} and \m{q}.
\end{itemize}

The semantics of networks is given in form of transitions
$N,\sigma,\mathcal G \reduce{\lambda} N',\sigma',\mathcal G'$, where the \emph{transition label}
$\lambda$ describes the action being executed in the transition.
The rules for reductions are those of \cref{fig:networksemantics}.
The set $\mathcal D$ is left implicit in these rules, as it never changes.
\begin{figure}[!htb]
\[
\infer[\rname{N|Com}]
      {\m{p} \triangleright \m{q}!e; B_1 \mid \m{q} \triangleright \m{p}?; B_2,\sigma,\mathcal G
        \reduce{\m{p}.v \com \m{q}}
        \m{p} \triangleright B_1 \mid \m{q} \triangleright B_2,\sigma[\langle\m{q},x \rangle \mapsto v],\mathcal G}
      {\m{p}\leftrightarrow\m{q}\in\mathcal G & e \downarrow^\sigma_\m{p} v}
\]

\[
\infer[\rname{N|Sel}]
      {\m{p} \triangleright \m{q} \oplus \ell_j; B \mid \m{q} \triangleright \m{s}\&\{\ell_1:B_1,\ldots,\ell_n:B_n\},\sigma,\mathcal G
        \reduce{\m{p} \com \m{q}[\ell_j]}
        \m{p} \triangleright B \mid \m{q} \triangleright B_j,\sigma,\mathcal G}
      {\m{p}\leftrightarrow\m{q}\in\mathcal G & 1 \leq j \leq n}
\]

\[
\infer[\rname{N|Intro}]
      {\begin{array}{c}
          \m{p} \triangleright \m{q}\tricom \m{r}; B_\m{p} \mid
          \m{q} \triangleright \m{p}?\m{r}; B_\m{q} \mid
          \m{r} \triangleright \m{p}?\m{q}; B_\m{r}, \sigma, \mathcal G\\
          \reduce{\m{p}.\m{q} \tricom \m{r}}\\
          \m{p} \triangleright B_\m{p} \mid
          \m{q} \triangleright B_\m{q} \mid
          \m{r} \triangleright B_\m{r}, \sigma, \mathcal G\cup\{\m{q}\leftrightarrow\m{r}\}
      \end{array}}
      {\m{p}\leftrightarrow\m{q}\in\mathcal G & \m{p}\leftrightarrow\m{r}\in\mathcal G}
\]

\vspace{\baselineskip}
\[
\infer[\rname{N|Spawn}]
      {\begin{array}{c}
          \m{p} \triangleright \texttt{spawn \m{q} with $B_q$ continue $B$},\sigma,\mathcal G
          \reduce{\m{p} \texttt{ spawns } \m{q}}
          \m{p} \triangleright B \mid
          \m{q} \triangleright B_q, \sigma, \mathcal G\cup\{\m{p}\leftrightarrow\m{q} \}
      \end{array}}
      {}
\]

\vspace{\baselineskip}
\[
\infer[\rname{N|Then}]
      {\m{p} \triangleright \texttt{if $e$ then } B_1 \texttt{ else } B_2, \sigma,\mathcal G
        \reduce{\tau_\m{p}}
        \m{p}\triangleright B_1,\sigma,\mathcal G}
      {e \downarrow^\sigma_\m{p} \texttt{true}}
\]

\[
\infer[\rname{N|Else}]
      {\m{p} \triangleright \texttt{if $e$ then } B_1 \texttt{ else } B_2, \sigma,\mathcal G
        \reduce{\tau_\m{p}}
        \m{p}\triangleright B_2,\sigma,\mathcal G}
      {e \downarrow^\sigma_\m{p} \texttt{false}}
\]

\[
\infer[\rname{N|Par}]
      {N|M,\sigma,\mathcal G \reduce{\lambda} N'|M,\sigma',\mathcal G'}
      {N,\sigma,\mathcal G \reduce{\lambda} N',\sigma',\mathcal G'}
\qquad
\infer[\rname{N|Struct}]
      {N,\sigma,\mathcal G \reduce{\lambda} N',\sigma',\mathcal G'}
      {N \preceq M & M,\sigma,\mathcal G \reduce{\lambda} M',\sigma',\mathcal G' & M' \preceq N'}
\]

\caption{Semantics of networks}
\label{fig:networksemantics}
\end{figure}

Rule~\rname{N|Struct} closes reductions under a \emph{structural precongruence}, which allows
procedure calls to be unfolded and their parameters instantiated.
The key rule defining this relation is
\[
\infer[\rname{N|Unfold}]
      {\m{p} \triangleright X(\m{q,r,}\ldots)
        \preceq
        \m{p} \triangleright B_X[\m{q\leftarrow a},\m{r\leftarrow b},\ldots]}
      {X(\m{a,b,}\ldots) = B_X \in \mathcal{D}_\m{p}}
\]
and it is the only rule depending on the set $\mathcal D$.
Structural precongruence is closed under reflexivity, transitivity, and context.

For the extraction algorithm, we also need an abstract semantics.
The rules for this semantics obtained from those in \cref{fig:networksemantics} by (i)~removing the
state $\sigma$, (ii)~replacing the connection graph $\mathcal G$ with a variable mapping
$\gamma$ and (iii)~replacing the transition labels by the corresponding abstract ones (using
expressions instead of values in \rname{N|Com}, and using the more expressive labels
$\m{p}.e\ \m{then}$ ad $\m{p}.e\ \m{else}$ instead of $\tau_\m{p}$ in the rules for conditionals).
The only rules updating $\gamma$ are \rname{N|Spawn} and \rname{N|Intro}, which are given in the
main text.

The abstract semantics for networks generalises the concrete semantics, in the sense that if
$N,\sigma,\mathcal G\reduce\lambda N',\sigma',\mathcal G'$, then
$N,\gamma\reduce\lambda' N',\gamma'$ where $\lambda'$ is the abstract label corresponding to
$\lambda$ and $\gamma,\gamma'$ are variable mappings corresponding to $\mathcal G$ and $\mathcal
G$', respectively, in the sense explained in the main text.

\subsection{Choreographies}
Choreography bodies are generated by the grammar
\begin{align*}
  C ::= & \ \nil
  \mid X
  \mid \m{p}.m \com \m{q}; C
  \mid \m{p} \com \m{q}[\ell]; C
  \mid \texttt{if } \m{p}.e \texttt{ then } C_1 \texttt{ else } C_2 \\
  & \mid \texttt{\m{p} spawns \m{q}};\; C
  \mid \m{p.q} \tricom \m{r};\; C
\end{align*}

A choreography consists of a set of procedure definitions and a main choreography body, written
$\texttt{def}\{X_i(\tilde{\m{p}})=C_i\}_{i\in I} \texttt{ in }C$.
The semantics of choreographies is again given as transitions
$C,\sigma,\mathcal G \reduce{\lambda}C',\sigma',\mathcal G'$, with $C$ representing the main
choreography body, and is defined by the rules in \cref{fig:chorsemantics}.

\begin{figure}
\[
\infer[\rname{C|Com}]
      {\m{p}.e \com \m{q} ; C, \sigma,\mathcal G
        \reduce{\m{p}.v \com \m{q}}
        C,\sigma[\langle \m{q},x \rangle \mapsto v],\mathcal G}
      {\m{p}\leftrightarrow\m{q}\in\mathcal G & e \downarrow^\sigma_\m{p} v}
\quad\quad
\infer[\rname{C|Sel}]
      {\m{p} \com \m{q}[\ell] ; C, \sigma,\mathcal G
        \reduce{\m{p} \com \m{q}[\ell]}
        C,\sigma,\mathcal G}
      {\m{p}\leftrightarrow\m{q}\in\mathcal G}
\]

\[
\infer[\rname{C|Intro}]
      {\m{p}.\m{q} \tricom \m{r} ; C, \sigma,\mathcal G
        \reduce{\m{p}.\m{q} \tricom \m{r}}
        C,\sigma,\mathcal G\cup\{\m{q}\leftrightarrow\m{r}\}}
      {\m{p}\leftrightarrow\m{q}\in\mathcal G & \m{p}\leftrightarrow\m{r}\in\mathcal G}
\]

\[
\infer[\rname{C|Then}]
      {\texttt{if \m{p}.$e$ then } C_1 \texttt{ else } C_2, \sigma,\mathcal G
        \reduce{\tau_\m{p}}
        C_1,\sigma,\mathcal G}
      {e \downarrow^\sigma_\m{p} \texttt{ true}}
\]

\[
\infer[\rname{C|Else}]
      {\texttt{if \m{p}.$e$ then } C_1 \texttt{ else } C_2, \sigma,\mathcal G
        \reduce{\tau_\m{p}}
        C_2,\sigma,\mathcal G}
      {e \downarrow^\sigma_\m{p} \texttt{ false}}
\]

\[
\infer[\rname{C|Spawn}]
      {\m{p} \texttt{ spawns } \m{q} ; C, \sigma,\mathcal G
        \reduce{\m{p} \texttt{ spawns } \m{q}}
        C,\sigma,\mathcal G\cup\{\m{p}\leftrightarrow\m{q}\}}
{}
\]

\[
\infer[\rname{C|Struct}]
      {C_1,\sigma,\mathcal G \reduce{\lambda} C_1',\sigma',\mathcal G'}
      {C_1 \preceq C_2 & C_2,\sigma,\mathcal G \reduce{\lambda}C_2',\sigma',\mathcal G' & C_2' \preceq C_1'}
\]
\caption{Semantics of choreographies}
\label{fig:chorsemantics}
\end{figure}

Rule \rname{C|Struct} again closes reductions under a structural precongurence $\preceq$, which not
only allows procedure calls to be unfolded as before, but also allows non-interfering operations to
execute in any order.
The key rules for structural precongruence in choreographies are defined in \cref{fig:chorprecon}.

\begin{figure}
\[
\infer[\rname{C|Eta-Eta}]
      {\eta;\eta';C \preceq \eta';\eta;C}
      {\m{pn(\eta)} \cap \m{pn(\eta')} = \emptyset}
\]

\[
\infer[\rname{C|Eta-Cond}]
{\texttt{if \m{p}.$e$ then } (\eta; C_1) \texttt{ else } (\eta; C_2) \preceq \eta; \texttt{if \m{p}.$e$ then } C_1 \texttt{ else } C_2}
{\m{p} \notin \m{pn(\eta)}}
\]

\[
\infer[\rname{C|Cond-Eta}]
{\eta; \texttt{if \m{p}.$e$ then } C_1 \texttt{ else } C_2 \preceq \texttt{if \m{p}.$e$ then } (\eta; C_1) \texttt{ else } (\eta; C_2)}
{\m{p} \notin \m{pn(\eta)}}
\]

\[
\infer[\rname{C|Cond-Cond}]
{\begin{array}{c}
\begin{array}{rl}
\texttt{if \m{p}.$e$ then}\; & (\texttt{if \m{q}.$e'\;$then}\; C_1\; \texttt{else}\; C_2)\;\\ 
\texttt{else}\; & (\texttt{if \m{q}.$e'\;$then}\; C_1'\; \texttt{else}\; C_2')
\end{array}\\
\preceq\\
\begin{array}{rl}
\texttt{if \m{q}.$e'\;$then}\; & (\texttt{if \m{p}.$e$ then}\; C_1\; \texttt{else}\; C_1')\;\\
\texttt{else}\; & (\texttt{if \m{p}.$e$ then}\; C_2\; \texttt{else}\; C_2')
\end{array}
\end{array}}
{\m{p}\neq\m{q}}
\]

\[
\infer[\rname{C|Unfold}]
{X(\m{a,b,}\ldots) \preceq C_X[\m{p}\leftarrow\m{a}, \m{q}\leftarrow\m{b},\ldots]}
{X(\m{p,q,}\ldots) = C_X \in \mathcal{D}}
\]
\caption{Structural precongruence in choreographies}
\label{fig:chorprecon}
\end{figure}

Rule \rname{Eta-Eta} swaps non-interfering adjacent communications (these can be value
communication, label selection, introduction actions, or spawning of new processes).
Function \m{pn()} returns the set of process names involved in $\eta$, so the premise of the rule is that the two adjacent communications do not have any processes names in common.
Likewise, rules \rname{C|Eta-Cond} and \rname{C|Cond-Eta} allow for swapping conditionals with
interactions, and \rname{C|Cond-Cond} for swapping conditionals at different processes.

\subsection{Extraction}
\label{sec:procedureextract}
For completeness, we recap some of the definitions from~\cite{CLM17} that are unchanged (although
they now have a wider scope) in this work.
Throughout most of this section we work only with the abstract semantics for networks.

\begin{definition}
The \emph{Abstract Execution Space (AES)} of a network $N$ and variable mapping $\gamma$ is a
directed graph whose nodes are all pairs $N',\gamma'$ such that
$N,\gamma\reduce{\lambda_1}\cdots\reduce{\lambda_n} N',\gamma'$, and such that there is an edge from
$N_1,\gamma_1$ to $N_2,\gamma_2$ with label $\lambda$ iff $N_1,\gamma_1\reduce{\lambda}N_2,\gamma_2$.
\label{def:aes}
\end{definition}
The AES is an abstract representation of all possible executions of $N$.

\begin{definition}
A \emph{Symbolic Execution Graph (SEG)} for $N,\gamma$ is a subgraph of the AES for $N,\gamma$ that
contains $N,\gamma$, and such that every node $N',\gamma'$ with $N'\neq\nil$ has either one outgoing
edge labelled by an interaction, or two outgoing edges labelled $\m{p}.e \textup{\texttt{ then}}$
and $\m{p}.e \textup{\texttt{ else}}$ respectively.
\label{def:seg}
\end{definition}
An SEG fixes an order of execution of (inter)actions, representing just a single evolution of the
network.
Confluence of the network semantics implies that the existence of a SEG is independent of this
order.

Our extraction algorithm relies on building a (finite) SEG in finite time.
The only aspects not discussed in the main text regard restricting (i)~the usage of rule
\rname{N|Struct} (for termination) and (ii)~the closure of loops (for soundness).

\subsubsection{Unfolding procedures.}
We restrict the AES to transitions that only apply \rname{N|Unfold} to processes of the form
$\m{p}\triangleright X$, where $\m{p}$ also appears in the label of the reduction.
In the setting of~\cite{CLM17} this guarantees that the AES is finite.

\subsubsection{Restricting loop closure.}
Soundness of extraction requires that any sequence of actions executable from $N,\sigma,\mathcal G$
be executable by $C,\sigma,\mathcal G$ if $C$ is obtained by extraction from $N$.
Our definition of SEG allows for loops where some processes in the network never reduce
(see~\cite{CLM17} for examples), from which we would extract unsound choreographies.
To avoid this, we annotate all process in networks with a (Boolean) marking, which is also checked
when comparing nodes.
All processes are initially unmarked; when there is an edge $N,\gamma\reduce\lambda N',\gamma'$, the
processes appearing in $\lambda$ become marked.
If this results in all processes in $N'$ being marked, all markings are erased instead.
A \emph{valid} SEG is one where every loop contains a node where all processes are unmarked; we only
allow extraction from valid SEGs.

The implementation of this check is computationally expensive, so instead we use a counter that
counts how many times the marking has been reset in the branch leading from the root to the current
node~\cite{CLMS22}.
This requires that the two branches of a conditional are generated independently and not allowed to
have edges between them, which we address by including an additional ``choice path'' to every node
and allowing loops to be closed only when the target node's choice path is a subsequence of the
origin's.

These considerations are orthogonal to the current work, and they are unchanged in the current
implementation.

The new ingredient in loop closure is that we allow variable renaming.
Soundness of this construction is stated in \cref{lemma:behaviouralloops}.
\begin{proof}[\cref{lemma:behaviouralloops}]
Let $\tilde{\lambda}$ and $M$ be as in \cref{def:nodeequal}. Then let $\tilde{\lambda'}$ be the same series of interactions as in $\tilde{\lambda}$, but where every process name is replaced with the process they map to in $M$. It is possible to apply this new series of interactions by $N',\gamma'\reduce{\tilde{\lambda'}}^*N'',\gamma''$ since by definition, the processes in $N$ can be renamed to obtain $N'$, and for every renaming using $M(\m{p})=\m{q}$, all variables $\langle \m{p,a}\rangle\mapsto \m{b}\in\gamma$  and $\langle \m{q,a}\rangle\mapsto \m{c}$ implies the renaming $M(\m{b})=\m{c}$ too. Since the same renaming of processes is done in $\tilde{\lambda'}$, then $N,\gamma\reduce{\tilde{\lambda}}^* N',\gamma'$ implies $N',\gamma'\reduce{\tilde{\lambda'}}^*N'',\gamma''$. Furthermore, $N',\gamma$ and $N'',\gamma''$ are also behaviourally equivalent, as they follow the same reductions that previously lead to a behaviourally equivalent network, so the entire process can be repeated, by applying the renaming using the new mapping $M'$ to $\tilde{\lambda'}$ to obtain $\tilde{\lambda''}$, which can then be used to reduce further by $N'',\gamma''\reduce{\tilde{\lambda''}}^*N''',\gamma'''$, and so on.

Since the reductions show recursive behaviour, creating a loop after applying every reduction in $\tilde{\lambda}$ to $N$ does indeed capture the recursive behaviour of the network.
\qed
\end{proof}

\subsection{Soundness and completeness}
\label{section:bisimilarity}
We now show that choreographies extracted from valid SEGs are bisimilar to the network they are
extracted from.
The proof is very similar to the one in~\cite{CLMS22}, with only minor modifications to account for
the fact that we now use process variables in the SEG.

Throughout this section, let $N$ be a network and $\langle\mathcal D,C\rangle$ be the choreography
extracted from $N$ for a particular SEG $S$ (where $C$ is the main choreography body and
$\mathcal D$ is the set of procedure definitions).
Also let $\sigma$ and $\mathcal G$ be a state and a connection graph, intuitively the ``initial''
state and connection graph of $N$.

Define sequences $\{\lambda_i\}_{i\in I}$, $\{C_i\}_{i\in I}$, $\{\sigma_i\}_{i\in I}$, and $\{\mathcal G_i \}_{i \in I}$ of possibly infinite length as:
\begin{itemize}
\item $C_0=C$, $\sigma_0=\sigma$ and $\mathcal G_0 = \mathcal G$.
\item For each $i$, $\lambda_i$ is the label of the transition executing the head action in $C_i$,
  that is, the only action that can be executed without applying any of the structural precongruence
  rules other than \rname{C|Unfold}.
\item For each $i$, $C_{i+1},\sigma_{i+1},\mathcal G_{i+1}$ are the only choreography, state and
  connection graph such that
  $C_i,\sigma_i,\mathcal G_{i} \reduce{\lambda_i} C_{i+1},\sigma_{i+1},\mathcal G_{i+1}$.
  (Note that these are uniquely defined from the transition label.)
\item $I=\{0,\ldots,n\}$ if $C_n$ is $\nil$ for some $n$, and $\mathbb{N}$ otherwise.
\end{itemize}
Intuitively, these sequences represent one execution path for $C$ -- namely, the path determined by
$S$.

The first step is establishing that this execution path can be mimicked by $N$.
\begin{lemma}
  There exists a sequence of networks $\{N_i\}_{i\in I}$ such that $N_0=N$, and
  $N_i,\sigma_i,\mathcal G_i\reduce{\lambda_i}N_{i+1},\sigma_{i+1},\mathcal G_{i+1}$.
\label{lemma:networksequence}
\end{lemma}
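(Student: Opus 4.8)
The plan is to proceed by induction on $i$, constructing the network sequence $\{N_i\}_{i \in I}$ in lockstep with the already-fixed choreography sequence $\{C_i\}_{i\in I}$ coming from the SEG $S$. The base case is immediate: set $N_0 = N$, which matches $C_0 = C$ by hypothesis. For the inductive step, I would assume $N_i$ has been constructed with the property that $C_i$ is the choreography body obtained by reading off the subtree of the (unrolled) SEG rooted at the node corresponding to $N_i, \gamma_i$, and then show how $N_i$ reduces to the required $N_{i+1}$.

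The key observation driving the induction is that every choreography action $\lambda_i$ in the extracted choreography was placed there by the construction of \cref{sec:prevchorgen}: it is precisely the label of an edge in the SEG outgoing from the node corresponding to $N_i$. Since the SEG is a subgraph of the AES (\cref{def:seg,def:aes}), that edge witnesses an abstract reduction $N_i, \gamma_i \reduce{\lambda_i'} N_{i+1}, \gamma_{i+1}$, where $\lambda_i'$ is the abstract label matching the concrete $\lambda_i$. First I would establish this correspondence by a case analysis on the head action of $C_i$ (communication, selection, introduction, spawn, or conditional), in each case identifying the outgoing SEG edge whose transition label generated that head action, and reading off the target node $N_{i+1}, \gamma_{i+1}$. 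For the conditional case, $C_i$ presents two branches and the label $\lambda_i$ (either $\m{p}.e\ \texttt{then}$ or $\m{p}.e\ \texttt{else}$) selects which of the two outgoing SEG edges to follow, again matching the definition of $\lambda_i$ as the head-action label.

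What remains is to lift this abstract reduction to a concrete reduction $N_i, \sigma_i, \mathcal G_i \reduce{\lambda_i} N_{i+1}, \sigma_{i+1}, \mathcal G_{i+1}$ using the same state $\sigma_i$ and connection graph $\mathcal G_i$ that appear in the choreography reduction. Here I would invoke the relationship between the abstract and concrete semantics stated in the appendix: the abstract semantics generalises the concrete one, with $\gamma$ tracking exactly the connectivity information recorded in $\mathcal G$. The premises of the concrete network rule (for instance $\m{p} \leftrightarrow \m{q} \in \mathcal G$ in \rname{N|Com}, or the evaluation premise $e \downarrow^\sigma_\m{p} v$) must be checked; the connectivity premises hold because $\gamma_i$ is defined on the relevant pair exactly when the corresponding edge is in $\mathcal G_i$, and the evaluation premises are inherited from the fact that $C_i, \sigma_i, \mathcal G_i$ successfully reduced under $\lambda_i$ — the side conditions are shared between the \rname{C|} and \rname{N|} rules. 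I would also need to confirm that $\sigma_{i+1}$ and $\mathcal G_{i+1}$ produced by the network rule coincide with those of the choreography reduction, which follows since the two rule sets update state and graph identically for matching labels (compare \rname{N|Intro} with \rname{C|Intro}, \rname{N|Spawn} with \rname{C|Spawn}, etc.).

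The main obstacle I anticipate is the bookkeeping around process variables and renaming: the SEG uses the abstract mapping $\gamma$ and may close loops via a nontrivial bijection $M$ (as in \cref{lemma:behaviouralloops}), so the node $N_{i+1}, \gamma_{i+1}$ reached along a loop-closing edge is only \emph{equivalent}, not identical, to the network produced by a raw reduction. To handle this I would appeal to \cref{lemma:behaviouralloops}, which guarantees that the redirected edge still corresponds to a genuine reduction in the AES up to the renaming recorded on that edge, together with the $\alpha$-renaming conventions and the \rname{N|Struct}/\rname{N|Unfold} rules to instantiate procedure parameters correctly when a procedure call is unfolded. Concretely, whenever $\lambda_i$ is generated by a loop-closing edge carrying mapping $M$, I would apply $M$ (extended homeomorphically to behaviours, as in the equivalence definition) to realign the concrete network with the canonical node, so that the inductive invariant — that $N_{i+1}$ corresponds to the SEG node reading off $C_{i+1}$ — is restored. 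Once this alignment is in place, the reduction $N_i \reduce{\lambda_i} N_{i+1}$ goes through, completing the induction and hence the lemma.
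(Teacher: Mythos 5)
Your proposal is correct and follows essentially the same route as the paper's proof: an induction on $i$ maintaining the invariant that $N_i$ corresponds to a node $n_i$ of the SEG up to the composition of the variable mappings along the path from $n_0$ to $n_i$, with the construction of $\{\lambda_i\}_{i\in I}$ as head-action labels supplying the matching outgoing edge at each step. Your additional detail---the case analysis on head actions, the abstract-to-concrete lifting of premises, and the realignment by $M$ at loop-closing edges---merely makes explicit what the paper compresses into the phrase ``modulo the current assignment of process names to process variables.''
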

\begin{proof}
  We prove by induction on $i$ that there exists a sequence of nodes $\{n_i\}_{i\in I}\subseteq S$
  such that: $N_i$ is obtained from $n_i$ by applying the composition of all variable mappings in
  the path from $n_0$ to $n_i$ to the network in $n_i$.

  For $i=0$ the thesis trivially holds (with empty path and identity variable mapping).
  Assume by the induction hypothesis that it holds for $i$.
  Due to how the sequence $\{\lambda_i\}_{i\in I}$ was constructed, there must be an outgoing edge
  of $n_i$ whose label $\lambda'_i$ is the abstraction of an action executable from
  $N_i,\sigma_i,\mathcal G_i$, modulo the current assignment of process names to process variables.
  The target node $n_{i+1}$ of this edge satisfies the thesis by construction.
  \qed
\end{proof}

From this point onwards we denote by $M_k$ be the composition of all variable mappings in the path
from $n_0$ to $n_k$.

\begin{lemma}
  For every $i\in I$ and transition label $\lambda$, it is the case that $C_i,\sigma_i,\mathcal G_i$
  can execute a reduction labelled by $\lambda$ iff $N_i,\sigma_i,\mathcal G_i$ can execute a
  reduction labelled by $\lambda$.
\label{lemma:chorreduceiffnetworkreduce}
\end{lemma}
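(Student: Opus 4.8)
The plan is to prove the two implications by comparing the actions enabled at $C_i$ and at $N_i$ through their common origin, the SEG node $n_i$, using confluence of the network semantics on one side and the commuting rules of structural precongruence (\cref{fig:chorprecon}) on the other. Recall from \cref{lemma:networksequence} that $N_i$ is obtained by applying $M_i$ to the network in $n_i$, that $C_i$ is read off from the SEG subtree rooted at $n_i$, and that the head action $\lambda_i$ of $C_i$ is, modulo $M_i$, precisely the label of the edge chosen at $n_i$ --- an action that $N_i,\sigma_i,\mathcal G_i$ can execute. Hence the case $\lambda=\lambda_i$ is immediate in both directions, and the real content is the matching of the further actions that either side can perform \emph{out of} the order fixed by the SEG.

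For the direction ``$C_i$ reduces by $\lambda$ implies $N_i$ does'', I would observe that $C_i$ can fire a non-head $\lambda$ only after rewriting it, via the commuting rules \rname{C|Eta-Eta}, \rname{C|Eta-Cond}, \rname{C|Cond-Eta} and \rname{C|Cond-Cond}, into a body whose head is $\lambda$; call the actions so commuted past $\eta_1,\ldots,\eta_k$. The side conditions of those rules (disjointness of process names, or process distinctness for conditionals) force $\lambda$ to be independent of every $\eta_j$, and when a conditional is crossed the same $\lambda$ must occur identically in both branches. Since $\eta_1,\ldots,\eta_k,\lambda$ are labels along a prefix of the SEG path out of $n_i$, each executable from the corresponding network, confluence lets the independent action $\lambda$ be pulled to the front, yielding $N_i,\sigma_i,\mathcal G_i \reduce{\lambda}\cdots$; the independent prefix leaves the state and the connections of $\lambda$'s processes untouched, so the concrete premises of \rname{N|Com}, \rname{N|Intro} and the rest still hold.

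The converse, ``$N_i$ reduces by $\lambda$ implies $C_i$ does'', is dual. Because the SEG is \emph{valid} (every loop contains a node where all processes reduce), no process is deferred indefinitely, so the next action of any process enabled at $N_i$ occurs at finite depth in $C_i$; in particular $\lambda$ sits after a finite prefix $\eta_1,\ldots,\eta_k$. As $\lambda$ is already enabled at $N_i$, the diamond property shows that performing the prefix neither disables nor interferes with $\lambda$, whence $\lambda$ shares no process names with any $\eta_j$ and is at a process distinct from any conditional among them. The same commuting rules then apply to bring $\lambda$ to the head of $C_i$, with conditionals again treated in then/else pairs via \rname{C|Eta-Cond} and \rname{C|Cond-Cond}.

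The main obstacle, and the only genuinely new ingredient over the proof in~\cite{CLMS22}, is bookkeeping the renaming $M_i$: every claim about enabledness and about the connection premises of the network and choreography rules must be transported across $M_i$ between the abstract node $n_i$ and the concrete $N_i,\sigma_i,\mathcal G_i$. The crucial fact is that whenever $M_i$ absorbs a loop-closing renaming, the behavioural-equivalence conditions of \cref{def:nodeequal} --- already exploited in \cref{lemma:behaviouralloops} --- guarantee that $M_i$ sends the abstract variable mapping $\gamma$ to one consistent with $\mathcal G_i$ for exactly those variables that are actually evaluated in subsequent reductions. This makes an action enabled in the renamed concrete network iff its abstract counterpart is enabled at $n_i$, which is precisely what aligns the two directions. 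What remains is a routine case analysis over the shape of $\lambda$ (value communication, label selection, introduction, spawn, conditional), checking in each case the side conditions of the relevant commuting rule and the preservation of the corresponding connection constraints.
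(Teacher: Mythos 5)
Your proposal is correct and takes essentially the same route as the paper's proof: the forward direction identifies $\lambda$ with some $\lambda_jM_j$ in the canonical sequence via the process-disjointness side conditions of the commuting rules, and concludes enabledness at $N_i$ because the independent prefix leaves the behaviour, state, and connections of $\lambda$'s processes untouched; the backward direction uses validity of the SEG together with the fact that $\lambda$'s processes are committed to their head actions to place $\lambda$ at finite depth and commute it to the head of $C_i$, with the renaming $M_i$ tracked throughout exactly as in the paper. The only cosmetic slip is attributing the disjointness of $\lambda$ from the prefix to the diamond property: as in the paper's argument, that disjointness really follows from each process having a single head action (so $\lambda$ remains enabled and untouched until executed), and confluence is then what licenses the commutation, not the other way around.
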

\begin{proof}
  Assume that $C_i,\sigma_i,\mathcal G_i\reduce{\lambda}C',\sigma',\mathcal G'$ for some $C'$,
  $\sigma'$ and $\mathcal G'$.
  Let $j\geq i$ be the minimal index such that $\lambda$ and $\lambda_jM_j$ share a process name.
  Since structural precongruence can only exchange actions that do not share process names, and the
  semantics of choreographies only allows a process one action at each point in the sequence, it
  follows that $\lambda=\lambda_jM_j$.
  Furthermore, since no actions in $\lambda_iM_i,\ldots,\lambda_{j-1}M_{j-1}$ share process names
  with $\lambda$, if \m{p} is a process appearing in $\lambda$ it follows that (i)~\m{p}'s behaviour
  in $N_i,\ldots,N_j$ does not change, (ii)~$\sigma_i(\m{p})=\sigma_j(\m{p})$, and (iii)~the
  edges between \m{p} and other processes in $\lambda$ are the same in $\mathcal G_i$ and
  $\mathcal G_j$.

  Since $N_j,\sigma_j$ can execute $\lambda$ and the conditions for executing an action only depend
  on the processes involved in the action (their state, behaviour and interconnections), this
  implies that $N_i,\sigma_i,\mathcal G_i$ can reduce with label $\lambda$ to some network $N'$.
  An analysis of the rules for both choreography and network semantics shows that necessarily the
  resulting state and connection graph must be precisely $\sigma'$ and $\mathcal G'$, respectively.
  
  Conversely, assume that $N_i,\sigma_i,\mathcal G_i\reduce{\lambda}N',\sigma',\mathcal G'$ for some
  $N'$, $\sigma'$ and $\mathcal G'$.
  Since the processes involved in $\lambda$ cannot participate in any other reductions, the
  reduction labelled by $\lambda$ must remain enabled until it is executed.
  Furthermore, the restrictions on valid SEGs imply that it is executed, so there must exist
  $j\geq i$ such that $\lambda=\lambda_jM_j$ and $\lambda$ shares no process names with
  $\lambda_iM_i,\ldots,\lambda_{j-1}M_{j-1}$.
  Given how $\{\lambda_i\}_{i\in I}$ was defined from $C$, this implies that we can use structural
  precongruence to rewrite $C_i$ to allow executing $\lambda_j$.
  As before, the resulting state and connection graph must coincide with $\sigma'$ and
  $\mathcal G'$, respectively.
  \qed
\end{proof}

We now show that this particular reduction path captures all possible reduction paths.

\begin{lemma}
  Let $\tilde{\lambda'}=\lambda'_0,\ldots,\lambda'_j$ be a finite sequence of reductions labels such
  that $C,\sigma,\mathcal G\reduce{\tilde{\lambda'}}^* C',\sigma',\mathcal G'$.
  Then there exist $n\geq j$ and a permutation $\pi:\{0,\ldots,n\} \rightarrow \{0,\ldots,n\}$ such
  that $\lambda'_i=\lambda_{\pi(i)}\theta$ for $i=0,\ldots,j$, where $\theta$ is a renaming of
  spawned processes.\footnote{In other words, a mapping from process names to process names that
    is the identity for all processes initially in $C$.}
  Furthermore $\tilde{\lambda'}$ can be obtained from $\lambda_0,\ldots,\lambda_n$ by repeatedly
  transposing consecutive pairs of labels that do not share process names.
  \label{lemma:chorpermutation}
\end{lemma}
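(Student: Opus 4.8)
The plan is to proceed by induction on the length $j+1$ of $\tilde{\lambda'}$, peeling off the first label $\lambda'_0$, matching it against a uniquely determined position $k$ in the canonical sequence $\lambda_0,\lambda_1,\ldots$ fixed by $S$, and then recursing on the tail. The base case (the empty sequence) is immediate, taking $n=0$ and the empty reordering. The reason the matching is well defined is that both $\tilde{\lambda'}$ and the canonical path start from the \emph{same} $C,\sigma,\mathcal G$ in the concrete semantics: since the state is threaded deterministically and the semantics is confluent, every conditional is resolved to the same branch along the two executions, so the \texttt{then}/\texttt{else} labels they produce agree and cannot cause a spurious mismatch.

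First I would establish the matching of the head. Since $C_0=C$ can perform $\lambda'_0$, I reuse the argument inside \cref{lemma:chorreduceiffnetworkreduce}: let $k$ be the least index such that $\lambda'_0$ and $\lambda_k$ share a process name. Because structural precongruence can only exchange actions with disjoint process names, and each process performs at most one action at a given point, it follows that $\lambda'_0=\lambda_k\theta_0$ for a renaming $\theta_0$ of spawned-process names, and that each of $\lambda_0,\ldots,\lambda_{k-1}$ is name-disjoint from $\lambda_k$. That $\lambda'_0$ really does occur among the $\lambda_i$ (completeness) follows because the canonical path is a maximal concrete execution of $C$ and, by confluence, any action enabled at $C$ must appear in it up to reordering of independent actions; this also explains why we may need $n\ge k>j$.

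Next I would use the swap rules of \cref{fig:chorprecon} (\rname{C|Eta-Eta}, \rname{C|Eta-Cond}, \rname{C|Cond-Eta}, \rname{C|Cond-Cond}) to transpose $\lambda_k$ past the name-disjoint labels $\lambda_{k-1},\ldots,\lambda_0$ and bring it to the front, which supplies the ``adjacent transposition'' witness for the head. Executing $\lambda'_0=\lambda_k\theta_0$ from $C$ then yields a configuration $C''$ whose canonical sequence is $\lambda_0,\ldots,\lambda_{k-1},\lambda_{k+1},\ldots$, i.e.\ the original canonical sequence with $\lambda_k$ deleted. Applying the induction hypothesis to the tail $\lambda'_1,\ldots,\lambda'_j$ reducing from $C''$ gives a permutation, a bound, and a renaming for the tail, and I would finally splice the head transposition and $\theta_0$ together with the data from the tail to assemble the required $\pi$, $n$, and $\theta$.

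The main obstacle I anticipate is justifying that deleting the independent action $\lambda_k$ commutes with the canonical linearization, so that the residual canonical sequence of $C''$ is exactly $\lambda_0,\ldots,\lambda_{k-1},\lambda_{k+1},\ldots$ and the induction hypothesis applies to a genuine canonical path rather than to some unrelated interleaving. This is essentially a confluence/diamond argument for the concrete semantics, further complicated by the interaction with procedure unfolding (\rname{C|Unfold}), which can change which action currently sits at the head, and by the bookkeeping needed to carry the spawn-renaming coherently through the recursion, so that the local renamings $\theta_0,\theta_1,\ldots$ compose into a single $\theta$ that is the identity on the processes initially present in $C$.
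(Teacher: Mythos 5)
Your proposal is correct and is essentially the paper's own argument: both rest on confluence of choreography reductions up to renaming of spawned processes and on transposing adjacent labels that share no process names, the paper phrasing this as one bulk permutation of a sufficiently long prefix $\lambda_0,\ldots,\lambda_n$ while you unfold the same swapping into an induction on the prefix of $\tilde{\lambda'}$. Your head-matching step is the same least-shared-index argument the paper already uses inside \cref{lemma:chorreduceiffnetworkreduce}, and the residual/diamond obstacle you flag is exactly what the paper's one-line appeal to confluence silently discharges.
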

\begin{proof}
  We observe that choreography reductions are confluent up to renaming of spawned
  process.\footnote{The proof of this is a standard double induction on possible reductions.}

  The lemma follows by noting that it is always possible to choose a large enough $n$ such that all
  actions in $\tilde{\lambda'}$ occur in $\lambda_0,\ldots,\lambda_n$, modulo renaming of spawned
  processes.
  By confluence, it is possible to swap pairs of consecutive independent actions in
  $\lambda_0,\ldots,\lambda_n$ repeatedly until the sequence starts with
  $\lambda'_0,\ldots,\lambda'_j$.
  The composition of the indices of the actions being swapped yields the permutation $\pi$.
  \qed
\end{proof}

A similar argument establishes the next lemma.
\begin{lemma}
  Let $\tilde{\lambda'}=\lambda'_0,\ldots,\lambda'_j$ be a finite sequence of reductions labels such
  that $N,\sigma,\mathcal G\reduce{\tilde{\lambda'}}^* N',\sigma',\mathcal G'$.
  Then there exist $n\geq j$ and a permutation $\pi:\{0,\ldots,n\} \rightarrow \{0,\ldots,n\}$ such
  that $\lambda'_i=\lambda_{\pi(i)}\theta$ for $i=0,\ldots,j$, where $\theta$ is a renaming of
  spawned processes.
  Furthermore $\tilde{\lambda'}$ can be obtained from $\lambda_0,\ldots,\lambda_n$ by repeatedly
  transposing consecutive pairs of labels that do not share process names.
  \label{lemma:networkpermutation}
\end{lemma}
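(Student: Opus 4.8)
The plan is to mirror the argument just given for \cref{lemma:chorpermutation}, transporting it from choreographies to networks. The two statements are literally identical except that $C,\sigma,\mathcal G$ is replaced by $N,\sigma,\mathcal G$, so the proof should follow the same three-move structure: first establish a confluence property for network reductions, then use it to embed the given trace $\tilde{\lambda'}$ into the canonical sequence $\lambda_0,\ldots,\lambda_n$, and finally read off the permutation $\pi$ from the sequence of transpositions. I would explicitly note that the canonical sequence $\{\lambda_i\}$ referenced here is the one fixed earlier from the SEG $S$, and that \cref{lemma:networksequence} has already provided the matching network sequence $\{N_i\}$ realising exactly this trace, so the ``reference'' run against which $\tilde{\lambda'}$ is compared is already available.

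First I would state and justify that network reductions are confluent up to renaming of spawned processes. This is the network analogue of the footnoted fact used in \cref{lemma:chorpermutation}, and it is precisely the confluence property already invoked in the paper's discussion of SEG construction (the success of extraction not depending on the order in which enabled actions are fired). Concretely, if $N,\sigma,\mathcal G$ can fire two independent labels $\lambda_a$ and $\lambda_b$ (independent meaning $\m{pn}(\lambda_a)\cap\m{pn}(\lambda_b)=\emptyset$), then the two resulting configurations reconverge after firing the other label, up to a renaming $\theta$ of freshly spawned names — the renaming is needed because two \rname{N|Spawn} steps in different orders may allocate different fresh names. I would argue this by the standard case analysis on pairs of applicable rules from \cref{fig:networksemantics}: since each rule's applicability and effect depend only on the states, behaviours, and connections ($\gamma$ entries, or edges of $\mathcal G$) of the processes named in its label, disjointness of the labels guarantees that neither step disables or alters the premises of the other.

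Next I would choose $n\geq j$ large enough that every action occurring in $\tilde{\lambda'}$ also occurs, modulo renaming of spawned processes, among $\lambda_0,\ldots,\lambda_n$; this is possible because $\tilde{\lambda'}$ is finite and, by confluence together with \cref{lemma:networksequence}, the canonical run and $\tilde{\lambda'}$ explore the same set of actions up to $\theta$. Then, exactly as in the choreography case, I would repeatedly transpose consecutive independent labels in $\lambda_0,\ldots,\lambda_n$ — each such swap justified by the confluence step above — until the prefix of length $j+1$ coincides with $\lambda'_0,\ldots,\lambda'_j$. Composing the index transpositions yields the permutation $\pi$ with $\lambda'_i=\lambda_{\pi(i)}\theta$ for $i=0,\ldots,j$, and the sequence of swaps is the desired factorisation into transpositions of non-interfering labels.

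The main obstacle I anticipate is the bookkeeping around the renaming $\theta$ of spawned processes, which is genuinely more delicate for networks than for choreographies. Because fresh names are generated by \rname{N|Spawn} at runtime and the relative order of spawns can change under transposition, I must ensure that $\theta$ is a single coherent renaming (the identity on processes initially present in $N$) that simultaneously reconciles all reordered spawn actions, rather than a different renaming per swap. I would handle this by fixing, once and for all, the allocation of fresh names along the canonical run $\lambda_0,\ldots,\lambda_n$ and defining $\theta$ to map the fresh names appearing in $\tilde{\lambda'}$ to their canonical counterparts; confluence up to renaming guarantees that the local renamings introduced at each transposition can be absorbed into this global $\theta$, so that the identity $\lambda'_i=\lambda_{\pi(i)}\theta$ holds uniformly. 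Everything else is a routine replay of the \cref{lemma:chorpermutation} argument.
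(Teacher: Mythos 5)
Your proposal coincides with the paper's own treatment: the paper proves this lemma simply by remarking that ``a similar argument'' to the proof of \cref{lemma:chorpermutation} applies, and that argument --- confluence of reductions up to renaming of spawned processes, choosing $n$ large enough that every action of $\tilde{\lambda'}$ occurs among $\lambda_0,\ldots,\lambda_n$ modulo renaming, and then repeatedly transposing consecutive independent labels to read off $\pi$ --- is exactly what you transport to networks. Your additional care in making $\theta$ a single global renaming that absorbs the per-swap renamings is a detail the paper leaves implicit (its footnote delegates confluence to a standard double induction), but it does not constitute a different approach.
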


\begin{lemma}
  Let $\tilde{\lambda'}=\lambda'_0,\ldots,\lambda'_j$, where $\{\lambda'_i\}_{i\in I}$ is obtained
  from $\{\lambda_i\}_{i\in I}$ by repeatedly swapping consecutive elements of that share no process
  names.
  Then there exist a choreography $C'$, a network $N'$, a state $\sigma'$ and a connection graph
  $\mathcal G'$ such that:
  \begin{itemize}
  \item $C,\sigma,\mathcal G\reduce{\tilde{\lambda'}}^*C',\sigma',\mathcal G'$;
  \item $N,\sigma,\mathcal G\reduce{\tilde{\lambda'}}^*N',\sigma',\mathcal G'$;
  \item the actions that $C'$ and $N'$ can execute coincide.
  \end{itemize}
  \label{lemma:applicablecoincides}
\end{lemma}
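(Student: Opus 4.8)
The plan is to establish the three conclusions together by induction on the length $j+1$ of $\tilde{\lambda'}$, carrying the invariant that after performing the prefix $\lambda'_0,\ldots,\lambda'_{i-1}$ the choreography and the network have reached configurations $C^{(i)},\sigma^{(i)},\mathcal{G}^{(i)}$ and $N^{(i)},\sigma^{(i)},\mathcal{G}^{(i)}$ sharing the same state and connection graph, and that each process sits in exactly the local configuration it occupies at the canonical point where it has executed the same prefix of its own actions. The base case holds by $C_0=C$, $\sigma_0=\sigma$, $\mathcal{G}_0=\mathcal{G}$ together with \cref{lemma:networksequence}, which gives $N_0=N$. Since $\tilde{\lambda'}$ is produced from $\{\lambda_i\}_{i\in I}$ only by transposing consecutive independent labels, each $\lambda'_i$ equals some canonical $\lambda_{\pi(i)}$ with no renaming involved, so I can treat the labels of $\tilde{\lambda'}$ as canonical labels throughout.

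The combinatorial core is that such transpositions never reorder two labels that share a process name: two dependent labels can never be simultaneously adjacent and independent, hence can never be swapped, so their relative order is preserved. Because all actions of a single process mention that process, the actions of each process among $\lambda'_0,\ldots,\lambda'_{i-1}$ therefore form a prefix of that process's canonical action sequence; equivalently, $\{\lambda'_0,\ldots,\lambda'_{i-1}\}$ is downward closed in the dependency order of $\{\lambda_i\}_{i\in I}$. This is precisely what keeps the invariant consistent, since it means every process genuinely rests at a canonical local state. It also delivers enabledness of the next label: the canonical predecessors of $\lambda'_i=\lambda_{\pi(i)}$ that involve any process of $\lambda'_i$ have all already been executed, so every process touched by $\lambda'_i$ has $\lambda'_i$ as its next canonical action. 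As the enabling side conditions in both \cref{fig:chorsemantics} and \cref{fig:networksemantics} are local — they only inspect the head behaviours, the $\sigma$-entries, and the connection-graph edges (or variable mappings) of the processes named in the label — $\lambda'_i$ is enabled in $C^{(i)}$, reached via \rname{C|Struct} after reordering non-interfering actions with the precongruence rules of \cref{fig:chorprecon}, and in $N^{(i)}$, where locality and \rname{N|Par} make the action directly available.

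For the common state and connection graph I would observe that each label fixes its own update: a communication label $\m{p}.v\com\m{q}$ carries the transmitted value and updates only \m{q}'s state entry, while \rname{C|Intro}/\rname{N|Intro} and \rname{C|Spawn}/\rname{N|Spawn} only add edges (resp.\ extend $\gamma$), and conditionals leave $\sigma$ and $\mathcal{G}$ untouched. These updates are identical between the choreography and the network semantics and commute for independent labels (disjoint state entries; edge additions always commute), so applying $\tilde{\lambda'}$ from the shared $\sigma,\mathcal{G}$ lands both systems in the same $\sigma^{(i+1)},\mathcal{G}^{(i+1)}$, maintaining that half of the invariant.

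The step I expect to be the real obstacle is the third conclusion, that the actions executable from $C'$ and from $N'$ coincide, because the invariant places different processes at different canonical indices and so \cref{lemma:chorreduceiffnetworkreduce} cannot be invoked verbatim at a single index. The plan is to reduce to it through locality: a label $\lambda$ is enabled at $C'$ iff every process it names exposes the matching head action there, and by the invariant those processes are in the same local states as at the canonical configuration where $\lambda$ first becomes enabled; the same local states determine enabledness at $N'$, and agreement at that canonical configuration is exactly \cref{lemma:chorreduceiffnetworkreduce}. The delicate cases to check are conditionals (a single process emits the two labels $\m{p}.e\ \texttt{then}$ and $\m{p}.e\ \texttt{else}$, so ``enabled action'' must be read as a pair), and \rname{N|Intro}/\rname{N|Spawn}, where enabledness depends on several processes and on $\gamma$, and where spawned names must be tracked consistently so that the renaming of spawned processes does not spoil the per-process correspondence. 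Once these are handled, taking $i=j+1$ yields the required $C'$, $N'$, $\sigma'$, $\mathcal{G}'$ and the coincidence of executable actions.
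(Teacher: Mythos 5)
Your proposal is correct in substance but takes a genuinely different route from the paper's. The paper proves this lemma by induction on the \emph{number of transpositions} used to build $\{\lambda'_i\}_{i\in I}$ from $\{\lambda_i\}_{i\in I}$: with zero swaps the claim is exactly \cref{lemma:chorreduceiffnetworkreduce} (with \cref{lemma:networksequence} supplying the matching network run), and each further swap of two adjacent labels with disjoint process names is discharged by a one-line appeal to confluence of both semantics. You instead induct along the prefix itself, carrying the stronger invariant that both systems share $\sigma$ and $\mathcal G$ and that every process sits at its canonical local state, justified by the trace-theoretic observation that adjacent-independent swaps never reorder dependent labels, so the executed set is downward closed and each process's history is a prefix of its canonical one; enabledness and the identity of the updates are then argued locally. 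What the paper's route buys is brevity, at the price of burying the real content in the word ``confluence'' --- in particular, re-establishing the coincidence of executable actions after a swap that straddles the prefix boundary is left implicit there. Your route is more elementary and self-contained and makes the commutation of disjoint updates explicit, but, as you yourself flag, its last step cannot cite \cref{lemma:chorreduceiffnetworkreduce} verbatim, because the processes named by a candidate action sit at different canonical indices; you must replay the second half of that lemma's argument (structural precongruence via the rules of \cref{fig:chorprecon} brings the action to the head of the choreography; a conditional contributes the pair $\m{p}.e\ \texttt{then}$/$\m{p}.e\ \texttt{else}$; \rname{N|Intro} and \rname{N|Spawn} involve several processes at once), which goes through precisely because enabledness in both semantics depends only on the named processes' local data --- the locality principle your invariant is designed to exploit. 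Two minor slips worth fixing: this lemma lives in the concrete semantics, so spawn and introduction extend $\mathcal G$, not $\gamma$; and your observation that no renaming $\theta$ intervenes here is correct and worth stating, since it distinguishes this lemma from \cref{lemma:chorpermutation} and \cref{lemma:networkpermutation}, where spawned names may differ.
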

\begin{proof}
  By induction on the number of elements swapped in constructing $\tilde{\lambda'}$.
  If that number is 0, then this is simply \cref{lemma:chorreduceiffnetworkreduce}.
  
  Assume by induction hypothesis that the thesis holds for $n$ swaps, and suppose that we now swap
  two consecutive labels $\lambda'_i$ and $\lambda'_{i+1}$ that share no process names.
  The thesis then follows from confluence of the semantics of both choreographies and networks.
  \qed
\end{proof}

We can now prove that $N$ and $C$ are bisimilar.
\begin{proof}[\cref{thm:bisim}]
  Define a relation $\mathcal R\subseteq\mathcal C\times\mathcal N$, where
  $$\mathcal{C}=\{C'\mid C,\sigma,\mathcal G\reduce{}^* C',\sigma',\mathcal G' \mbox{ for some $\sigma'$ and $\mathcal G'$}\}$$
  and
  $$\mathcal{N}=\{N'\mid N,\sigma,\mathcal G\reduce{}^* N',\sigma',\mathcal G'\mbox{ for some $\sigma'$ and $\mathcal G'$}\}$$
  by
  \[C'\mathcal R N' \mbox{ if there exists $\tilde{\lambda'}$ s.t. }
  C,\sigma,\mathcal G\reduce{\tilde{\lambda'}}^*C',\sigma',\mathcal G'
  \mbox{ and }
  N,\sigma,\mathcal G\reduce{\tilde{\lambda'}}^*N',\sigma',\mathcal G'\]
  
  Assume that $C'\mathcal R N'$.
  Then there exists a sequence of actions $\tilde{\lambda'}$ such that
  $C,\sigma,\mathcal G\reduce{\tilde{\lambda'}}C',\sigma',\mathcal G'$ and
  $N,\sigma,\mathcal G\reduce{\tilde{\lambda'}}N',\sigma',\mathcal G$.
  By \cref{lemma:chorpermutation}, $\tilde{\lambda'}$ can be obtained from $\tilde{\lambda}$ by
  repeatedly swapping adjacent, independent actions and renaming spawned processes.
  By \cref{lemma:applicablecoincides}, the actions that $C'$ and $N'$ can execute are the same.
  For each such action $\alpha$, \cref{lemma:chorpermutation} and \cref{lemma:applicablecoincides}
  can be applied to the sequence $\tilde{\lambda'};\alpha$ to conclude that if
  $C',\sigma',\mathcal G'\reduce{\alpha}C'',\sigma'',\mathcal G''$, then there exists an $N''$ such
  that $N',\sigma',\mathcal G'\reduce{\alpha}N'',\sigma'',\mathcal G''$ -- note that the renamings
  of spawned processes in both reduction sequences must coincide, so the end result is exactly the
  same.
  Conversely, if $N',\sigma',\mathcal G'\reduce{\alpha}N'',\sigma'',\mathcal G''$, then
  \cref{lemma:networkpermutation} and \cref{lemma:applicablecoincides} establish a similar
  correspondence.
  It follows that $\mathcal R$ is a bisimulation.
  \qed
\end{proof}

\subsection{Resource leak examples}
\label{section:leakexamples}
This sections contains examples of networks that cannot be extracted because they contain resource
leaks.
The networks are viable in the sense all communications match up, so they could be implemented as real systems.

\begin{example}
  We show a minimal resource leak example.
  The initial process is in an infinite loop where it repeatedly clones itself.
\begin{lstlisting}[mathescape=true, escapeinside=**]
*\m{p}*{ 
	def $X()${ spawn *\m{q}* with $X()$ continue $X()$ } 
	main{ $X()$ }
 }
\end{lstlisting}
\end{example}

\begin{example}
The initial process spawns several processes which after an setup phase will only interact among themselves and never terminate. The initial process then repeats the process indefinitely, adding more and more processes to the network.
\begin{lstlisting}[mathescape=true, escapeinside=**]
*\m{p}*{ 
	def $L1(\m{q})${ *\m{q}*!$data$; *\m{q}*?; $L1(\m{q})$ }
	def $L2(\m{a,b})${ *\m{a}*?; *\m{b}*?; *\m{b}*!$ok$; *\m{a}*!$ok$; $L2(\m{a,b})$ }
	def $X()${ 
		spawn *\m{a}* with *\m{p?q}*; $L1(\m{q})$ continue 
		spawn *\m{b}* with *\m{p?q}*; $L1(\m{q})$ continue 
		spawn *\m{c}* with *\m{p?a}*; *\m{p?b}*; $L2(\m{a,b})$ continue 
		*\m{a}\tricom\m{c}*; *\m{b}\tricom\m{c}*; $X()$ }
	main{ $X()$ }
 }
\end{lstlisting}
\end{example}

\end{document}